\definecolor{aliceblue}{rgb}{0.94, 0.97, 1.0}
\definecolor{blizzardblue}{rgb}{0.67, 0.9, 0.93}
\definecolor{antiquebrass}{rgb}{0.8, 0.58, 0.46}
\definecolor{beaublue}{rgb}{0.74, 0.83, 0.9}
\theoremstyle{remark} 
\newtheorem{thm}{Theorem}
\newtheorem{lem}{Lemma}
\newtheorem{prop}{Proposition}
\newtheorem{rem}{Remark}
\def\widebar{\accentset{{\cc@style\underline{\mskip10mu}}}}
\def\Widebar{\accentset{{\cc@style\underline{\mskip13mu}}}}
\begin{document}
\title{Matching Users’ Preference Under Target Revenue Constraints in Optimal Data Recommendation Systems}

\author{Shanyun~Liu, Yunquan Dong, Pingyi~Fan,~\IEEEmembership{Senior Member, ~IEEE}, Rui~She, Shuo~Wan\\\

\thanks{
Shanyun~Liu, Rui~She, Shuo~Wan and Pingyi Fan are with Tsinghua National Laboratory for Information Science and Technology(TNList) and the Department of Electronic Engineering,  Tsinghua University,  Beijing,  P.R. China, 100084. e-mail: \{liushany16,~sher15,~wan-s17@mails.tsinghua.edu.cn, fpy@tsinghua.edu.cn.\} }
\thanks{Yunquan Dong is with the School of Electronic and Information Engineering, Nanjing University of Information Science and Technology, Nanjing, P.R. China. email: yunquandong@nuist.edu.cn
}
\thanks{This work was supported in part by the National Natural Science Foundation of China (NSFC) under Grant 61771283, and in part by the China Major State Basic Research Development Program (973 Program) under Grant 2012CB316100(2).}
}

\maketitle

\graphicspath{{Figures/}}
 \vspace{-25mm}
\begin{abstract}
This paper focuses on the problem of finding a particular data recommendation strategy based on the user preferences and a system expected revenue. To this end, we formulate this problem as an optimization by designing the recommendation mechanism as close to the user behavior as possible with a certain revenue constraint. In fact, the optimal recommendation distribution is the one that is the closest to the utility distribution in the sense of relative entropy and satisfies expected revenue. We show that the optimal recommendation distribution follows the same form as the message importance measure (MIM) if the target revenue is reasonable, i.e., neither too small nor too large. Therefore, the optimal recommendation distribution can be regarded as the normalized MIM, where the parameter, called importance coefficient, presents the concern of the system and switches the attention of the system over data sets with different occurring probability. By adjusting the importance coefficient, our MIM based framework of data recommendation can then be applied to system with various system requirements and data distributions.
 Therefore, the obtained results illustrate the physical meaning of MIM from the data recommendation perspective and validate the rationality of MIM in one aspect.
\end{abstract}

\begin{IEEEkeywords}
Data recommendation; optimal recommendation distribution; utility distribution; message importance measure; importance coefficient.
\end{IEEEkeywords}

\IEEEpeerreviewmaketitle
\section{Introduction}
The data recommendation is one of the most fundamental problems in wireless mobile Internet, and it becomes more and more crucial in the era of big data.
With the explosive growth of data, it is difficult to send all the data to the users within tolerable time by traditional data processing technology \cite{chen2014big,bi2015wireless}.
Data push or recommendation technology may help users get their desired information in time. In fact, data can be sent in advance when the system is idle, even if there is no ask for it, rather than wasting time to wait for a clear request from users \cite{franklin1998data}. Furthermore, it is also arduous for users to find desired data among a mass of data available in the Internet \cite{hauswirth1999internet}. In general, the special data which catches the interests of users can be provided faster and easier with data recommendation system, since the recommendation strategy is usually well-designed based on the preferences of users explicitly \cite{tadrous2015proactive,shoukry2014proactive}. Compared to search engines, push is more convenient for less action required, and the quality of data pushed does not rely on the skills and knowledge of the users \cite{hauswirth1999internet}. For example, many applications in mobile phone prefer to recommend some data based on the user's interest to improve the user experience. In addition, different from traditional industry, the Internet enterprises can make a profit through mobile network by using push-based advertisement technology \cite{kim2009mobile}.

The previous works mainly discussed data push based on the content to solve the problem of data delivery \cite{franklin1998data,hauswirth1999internet,tadrous2015proactive,shoukry2014proactive,podnar2002mobile,kim2009mobile,nicopolitidis2005an,bhide2002adaptive}. Data push is usually regarded as a strategy of data delivery in distributed information systems \cite{franklin1998data}. The architecture for a mobile push system was proposed in \cite{hauswirth1999internet,podnar2002mobile}. In addition, \cite{nicopolitidis2005an} put forward an effective wireless push system for high-speed data broadcasting. The push and pull techniques for time-varying data network was discussed in \cite{bhide2002adaptive}. Furthermore, on this basis, the recommendation system was provided as information-filtering system which push data to users based on the knowledge about their preferences \cite{li2018joint,parra2014optimal}. \cite{li2018joint} discussed the joint content recommendation, and the privacy-enhancing technology in recommendation systems was investigated in \cite{parra2014optimal}. Besides, \cite{zhang2017personalized} put forward a personalized social image recommendation method. Furthremore, the recommendation technology was also used to solve the problem in multimedia big data \cite{zhou2016differentially}. Different from the preceding works, in this paper, we do not discuss the problem of data delivery based on content. As an alternative, we shall discuss the distribution of recommendation sequence based on the performance of users when the revenue of the recommendation system is required. 

Nowadays, user customization is crucial and it can be solved by means of recommendation. The personalized concern of user usually can be characterized by many properties of the data used by users, such as data format, keywords \cite{hauswirth1999internet,verbert2012context,cheng2016on}. However, we choose the frequency of data used to describe the performance of users since it has nothing to do with the concrete content. 

In addition, we take the revenue of the recommendation system into account. That is, a recommendation process consumes resources while gets benefits. Different recommendation types may bring different results. For example, the cost of omitting a data which should be pushed mistakenly may be much smaller than that of pushing invalid data to a user erroneously. On one hand, the users have confidence in the recommendation system if the desired data is pushed to these users correctly. On the other hand, pushing the information that users do not ask, such as advertisement, may seriously impact the user experience, but it can still bring revenue from advertisements. To balance the loss of different types of push errors, we give different weight to different types of push errors, which is similar with cost-sensitive learning \cite{elkan2001foundations,zhou2006trainging,lomax2013survey,du2010adapting}.

For different application scenarios, the system focuses on different events according to the need. For example, the small-probability events captures our attention in the minority subsets detection \cite{zieba2015counterterrorism,ando2006information}, while someone prefers the event with high probability in support vector machine (SVM) \cite{vapnik1982estimation}. For applications in wireless mobile Internet, in stage of expanding users, recommending the desired data accurately to attract more users is more important. However, in mature stage, advertising is focused on, since the applications expect to earn more money by advertising though it degrades the user experience. This paper will mainly discuss these two cases.

There are also some new findings when the message importance is taken into account \cite{Ivanchev2016information,Kawanaka2017information,monks2013machine,li2015leveraging,masnick1967on}, and message importance can be used to characterize the concern degree of events. Message importance measure (MIM) was proposed to characterize the message importance of events which can be described by discrete random variable where people pay most attention to the small-probability ones, and it highlights the importance of minority subsets \cite{fan2016message}. In fact, it is an extension of Shannon entropy \cite{shannon2001mathematical,verdu1998fifty} and R{\'e}nyi entropy \cite{renyi1961measures} from the perspective of Fadeev's postulates \cite{zum1957,renyi1961measures}. That is, the first three postulates are satisfied for all of them, and MIM weakens the fourth postulate on the foundation of R{\'e}nyi entropy. Moreover, the logarithmic form and polynomial form are adopted in Shannon entropy and R{\'e}nyi entropy, respectively, while MIM uses the exponential form. \cite{she2017focusing} showed that MIM focuses on the a specific event by choosing corresponding importance coefficient. In fact, MIM has a wide range of applications in big data, such as compressed storage and communication \cite{liu2017non} and mobile edge computing \cite{she2018recognizing}. 

In fact, the superior recommendation sequence should like those generated by users, which means that the data recommendation mechanism agrees with user behavior. To this end, the probability of observing the recommendation sequence with the utilization frequency of the user data should be as close to one as possible in the statistical sense. According to \cite{Elements}, that means the relative entropy between the distribution of recommendation sequence and that of user data should be minimized. We assume the recommendation model pursues best user performance with a certain revenue guarantee in this paper.

We first find a particular recommendation distribution on the best effort in maximizing the probability of observing the recommendation sequence with the utilization frequency of the user data when the expected revenue is provided. Then, its main properties, such as monotonicity and geometrical characteristic, are fully discussed. This optimal recommendation system can be regarded as an information-filtering system, and the importance coefficient determines what events the system prefers to recommend.
The results also show that excessively low expectation of revenue can not constrain recommendation distribution and exorbitant expectation of revenue makes the recommendation system impossible to design. The constraints on the recommendation distribution is true if the minimum average revenue is neither too small nor too large, and there is a tradeoff between the recommendation accuracy and the expected revenue.

It is also noted that the form of this optimal recommendation distribution is the same as that of MIM when the minimum average revenue is neither too small nor too large. The optimal recommendation distribution is determined by the proportion of recommendation value of the corresponding event in total  recommendation value, where the recommendation value is a special weight factor. The recommendation value can be seen as a measure of message importance, since it satisfies the postulates of the message importance. Due to the same form with MIM, the optimal recommendation probability can be given by the normalized message importance measure when MIM is used to characterize the concern of system. Furthermore, when importance coefficient is positive, the small-probability events will be taken more attention. That is magnifying the importance index of small-probability events and lessening that of high-probability events. Therefore, we confirms the rationality of MIM from another perspective in this paper for characterizing the physical meaning of MIM by using data ecommendation system rather than based on information theory. Besides, we expand MIM to the general case whatever the probability of systems interested events is. Since the importance coefficient determines what events set systems are interested in, we can switch to different application scenarios by means of it.
That is, advertising systems are discussed if importance coefficient is positive, while noncommercial systems are adopted if importance coefficient is negative.
Compared with previous works about MIM \cite{fan2016message,she2017focusing,liu2017non}, most properties of optimal recommendation distribution is the same, but a clear definition of the desired event set can be given in this paper. The relationship between utility distribution and MIM was preliminarily discussed in \cite{liu2018switch}.

The main contribution of this paper can be summarized as follows. $(1)$ We put forward an optimal recommendation distribution that makes the recommendation mechanism agrees with user behavior with a certain revenue guarantee, which can improve the design of recommendation strategy. $(2)$ We illuminate that this optimal recommendation distribution is normalized message importance, when we use MIM to characterize the concern of systems, which presents a new physical explanation of MIM from data recommendation perspective. $(3)$ We expand MIM to the general case, and we also discuss the importance coefficient selection as well as its relationship with what events systems focus on.

The rest of this paper is organized as follows. The setup of optimal recommendation is introduced in Section \ref{sec:setup}, including the system model and the discussion of constraints. In Section \ref{sec: The optimal recommendation Distribution}, we solve the problem of optimal recommendation in our system model, and give complete solutions. Section \ref{sec: property} investigates the properties of this optimal recommendation distribution. The geometric interpretation is also discussed in this part. Then, we discuss the relationship between this optimal recommendation distribution and MIM in Section \ref{sec:relationship}. It is noted that recommendation distribution can be seen as normalized message importance in this case. The numerical results is shown and discussed to certificate our theoretical results in Section \ref{sec:numerical}. Section \ref{sec:conclusion} concludes the paper. In addition, the main notations in this paper are listed in Table. \ref{tab:notation}.

\begin{table*}[htbp]
\centering
    \caption{Notations.}\label{tab:notation}
\begin{tabular}{|c|l|}
\bottomrule
Notation & Description \\
\hline
$S$  & The set of all the data\\
\hline
$N$  &  The number of data classes\\
\hline
$S_i$ & The set of data belonging to the $i$-th class\\
& $S_i  \cap S_j =\emptyset$ for $i \ne j$, and $S=S_1 \cup S_2 \cup... S_{N-1} \cup S_{N}$  \\
\hline
$Q=\{q_1,q_2,...,q_N\}$  &   Raw distribution: the probability distribution of information source\\
\hline
$P=\{r_1,r_2,...,r_N\}$  &   Recommendation distribution: the probability distribution of recommended data\\
\hline
$U=\{u_1,u_2,...,u_N\}$  &   Utility distribution: the probability distribution of user's preferred data\\
\hline
$D(P\parallel U)$ &  The relative entropy or Kullback-Leibler (KL) distance between $P$ and $U$\\
\hline
$C_p$  &  The cost of a single data push\\
\hline
$R_p$  &  The earning when the pushed data is liked by the user\\
\hline
$C_n$  &  The cost when the pushed data is not liked by the user\\
\hline
$R_{ad}$  &  The advertising revenue when the pushed data is not liked by the user\\
\hline
$C_m$  &  The cost missing to push a piece of user's desired data\\
\hline
$\beta$  & The target revenue of a single data push \\
\hline
$\varpi$  & The importance coefficient\\
\hline
$\alpha$  & $\alpha=\frac{\beta +C_p-R_p}{ R_{ad}-R_p-C_n-C_m}$\\
\hline
$\gamma_u$  & $\gamma_u=\sum_{i=1}^N u_i^2$\\
\hline
$g(\varpi,V)$  & $g(\varpi,P)=\frac{\sum_{i=1}^N v_i^2 e^{\varpi (1-v_i)}}{\sum_{i=1}^N v_i e^{\varpi (1-v_i)}}$\\
\hline
$f(\varpi,x, U)$  & $f(\varpi,x, U)  = \frac{u_x e^{\varpi(1-u_x)}}{\sum\nolimits_{i=1}^N {u_i e^{\varpi (1-u_i)}}}$\\
\hline
$x$  & $x=1,2,\cdots,N$ is the index of classes\\
\hline


\toprule
\end{tabular}
\end{table*}

\section{System Model}\label{sec:setup}
We consider a recommendation system with $N$ classes of data, as shown in Figure \ref{fig:model}. In fact, data is often stored based on its categories for the convenience of indexing. For example, the news website usually classifies news into the following categories: politics, entertainment, business, scientific, sports, and so on.
    At each time instant, the information source generates a piece of data, which belongs to a certain class with probability distribution $Q$ and would be similar with another probability distribution $U$.
In general, the generated data sequence does not match the preference of the user.
    To optimize the information transmission process, therefore, a recommendation unit is used to determine whether the generated data should be pushed to the user with some deliberately designed probability distribution $U$.
One the one hand, the recommendation unit can make predictions of the user's needs and push some data to the user before he actually starts the retrieval process.
    In doing so, the transmission delay can be largely reduced, especially when the data amount is large.
    On the other hand, the recommending unit enables non-expert users to search and to access their desired data much easier.
Furthermore, we can profit more by pushing some advertisements to the user.
\begin{figure}
  \centerline{\includegraphics[width=11.0cm]{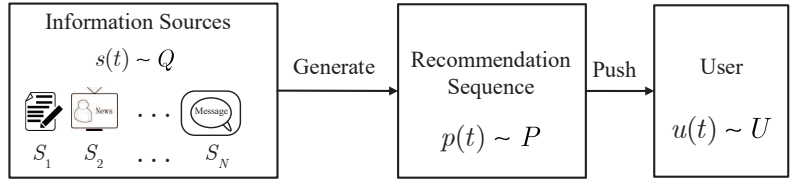}}
  \caption{System model.}\label{fig:model}
\end{figure}
\subsection{Data Model}\label{sec:data model}
We refer to the empirical probability mass function of the class indexes over whole the data set as the \textit{raw distribution} and denote it as $Q=\{q_1,q_2,...,q_N\}$.
      We refer to the probability mass function of users' preference over the classes as the \textit{utility distribution} and denote it as  $U=\{u_1,u_2,...,u_N\}$.
That is, for each piece of data, it belongs to class $i$ with probability $q_i$ and would be preferred by the user with probability $u_i$.
    To fit the preference of the user under some target revenue constraint, the system will make random recommendations according to a \textit{recommendation distribution} $P=\{p_1,p_2,...,p_N\}$.

We assume that each piece of data belongs to one and only one of the $N$ sets.
    That is, $S_i  \cap S_j =\emptyset$ for  $\forall i \ne j$, where $S_i$ is the set of data belonging to the $i$-th class.
Thus, the whole data set would be $S=S_1 \cup S_2 \cup... S_{N-1} \cup S_{N}$.
    Thus, the raw distribution can be expressed as  $q_i= \text{Pr}\{d \in S_i\}=\text{crad}(S_i)/ \text{crad}(S)$.
In addition, the utility distribution $U$ can be obtained by studying the data-using behavior of a specific group of users and thus is assumed to be known in prior in this paper.

For traditional data push, we usually expect to make $|u(t)-s(t)|$ smaller than a given value \cite{bhide2002adaptive}. Different from them, we do not consider this problem based on content. 
As an alternative, our goal is to find the optimal recommendation distribution $P$ so that the recommended data would fit the preference of the user as much as possible.
    To be specific, each recommended sequence of data should resemble the desired data sequence of the user in the statistical sense.
For a sequence of user's favorite data,  let $u^n$ be the corresponding class indexes.
    As $n$ goes to infinity, it is clear that $u^n \in T(U)$ with probability one, where $T(U)$ is the typical set under distribution $U$.
That is, $\Pr\{\frac{1}{n}\log \Pr(u^n)+H(U)=0\}=1$, where $ \Pr(u^n)$ is the occurring probability of $u^n$ and $H(U)$ is the joint entropy of $U$~\cite{Elements}.
    Since the class-index sequence $r^n$ of recommended data is actually generated with distribution $P$, the probability that $r^n$ falls in the typical set  $T(U)$ of distribution $U$ would be $\Pr\{r^n\in T(U)\}\buildrel\textstyle.\over=2^{-nD(P\parallel U)}$, where $D(P\parallel U)$ is the relative entropy between $P$ and $U$ \cite{Elements}.
It is clear that the optimal $P$ would maximizes probability $\Pr\{r^n\in T(U)\}$, which is equivalent to minimizing the relative entropy
    \begin{equation}
        D(P\parallel U).
    \end{equation}
In particular, our desired recommendation distribution $P$ is not exactly the same as the utility distribution of the user, because we also would like to intentionally push some advertisements to the user to increase our profit.

\subsection{Revenue Model}
We assume that the user divides the whole data set into two parts, i.e., the desired ones and the unwanted ones (e.g., advertisements).
    At the recommendation unit, the  data  can also be classified into two types according to whether it is recommended to the user.
    Different push types may strikingly lead to different results. For example, the cost of omitting a data which should be pushed may be much smaller than that of erroneously pushing invalid data to a user. The user experience will be enhanced if the data needed by users is pushed to them. Pushing some not need content to users, such as some advertisement, may seriously impact the user experience, but it still can bring in advertising revenue for the content delivery enterprise.
Using a similar revenue model as that in cost-sensitive learning \cite{elkan2001foundations,zhou2006trainging,lomax2013survey,du2010adapting}, we shall evaluate the revenue of the recommendation system as follows.
    \begin{itemize}
      \item The cost of making a recommendation is $C_p$;
      \item The revenue of a recommendation when the pushed data is liked by the user is $R_p$;
      \item The cost of a recommendation when the pushed data is not liked is $C_n$;
      \item The revenue of a recommendation when the pushed data is is not liked (but can serve as an advertisement) is $R_{ad}$;
      \item The cost of missing to recommend a piece of desired data of the user is $C_m$;
    \end{itemize}
Therefore, the revenue of recommending a piece of data belonging to class $i$ can be summarized in the Table \ref{tab:result1}.

\begin{table*}[htbp]
\centering
    \caption{The revenue matrix.}\label{tab:result1}
\begin{tabular}{|c|c|c|}
\bottomrule
 \diagbox{Preference}{Action} &  Recommend & Not recomend\\
\hline
Desired &  $R_p-C_p$ &  $C_m$\\
\hline
Unwanted  & $R_{ad}-C_p-C_n$ & 0\\
\toprule
\end{tabular}
\end{table*}

Moreover, the corresponding matrix of occurring probability is given by Table \ref{tab:result2}.
\begin{table*}[htbp]
\centering
    \caption{The matrix of occurring probability.}\label{tab:result2}
\begin{tabular}{|c|c|c|}
\bottomrule
 \diagbox{Preference}{Action} &  Recommend & Not recomend\\
\hline
Desired &  $p_iu_i$ &  $(1-p_i)u_i$\\
\hline
Unwanted  & $p_i(1-u_i)$ & $(1-p_i)(1-u_i)$\\
\toprule
\end{tabular}
\end{table*}

In this paper, we assume $C_p,R_p,C_n,R_{ad},C_m$ are constraints for a given recommendation system for simplicity. The expected system revenue can then be expressed as
\begin{flalign}
    \widebar R(P)=& \sum\limits_{i = 1}^N {\left( (R_p- C_p){p_i}{u_i} +(R_{ad}-C_p-C_n) p_i (1-u_i) -C_m (1-p_i)u_i \right)}  \\
    =& -(R_p + C_n + C_m - R_{ad} )(1-\sum\limits_{i = 1}^N p_iu_i) + R_p -C_p \\
    \label{df:R_exp}
    =& -(R_p + C_n + C_m - R_{ad} )\sum\limits_{i = 1}^N p_i(1-u_i) + R_p -C_p.
\end{flalign}

{\subsection{Problem Formulation}
In this paper, we consider the following three kinds of recommendation systems: 1)the \textit{adverting system} where recommending unwanted advertisements yields higher revenue; 2) the \textit{noncommercial system} where recommending user's desired data brings higher revenue; 3) the \textit{neutral system} where the system revenue is independent of recommendation probability $P$.
    For each given target revenue constraint $\widebar R(P) \ge \beta$ and each kind of system, we shall optimize the recommendation distribution $P$ by minimizing $D(P||U)$.
In particular, the following auxiliary variable is used
\begin{equation}\label{df:alpha}
    \alpha=\frac{\beta +C_p-R_p}{ R_{ad}-R_p-C_n-C_m}.
\end{equation}

\subsubsection{Advertising Systems}
 In an \textit{advertising system}, recommending a piece of user's unwanted data (an advertisement) yields higher revenue.
     Since the revenue of recommending an advertisement is the main source of income in this case, which is larger than other revenue and cost, the advertising system would satisfy the following condition.
    \begin{equation}\label{condt:adv_sys}
        \text{C}1:~~ R_p+C_n+C_m-R_{ad} <0.
    \end{equation}

By combining \eqref{df:R_exp}--\eqref{condt:adv_sys}, it is clear that the constraint $\widebar R(P)\ge \beta$ is equivalent to
\begin{equation}
    \sum\limits_{i = 1}^N { {p_i}{(1-u_i)} } \ge \alpha.
\end{equation}

For advertising systems, therefore, the feasible set of recommendation distribution $P$ can be expressed as
\begin{equation}\label{equ: E_condition1}
E_1=\Big\{P: \sum\nolimits_{i = 1}^N { {p_i}{(1-u_i)} } \ge \alpha | R_{ad}>R_p+C_n+C_m \Big\}.
\end{equation}

For a given target revenue $\beta$, we shall solve the optimal recommendation distribution $P$  of  advertising systems from
\begin{flalign}\label{equ:PQ2 optization}
\mathcal{P}_1: \,\, \arg \mathop {\min }\limits_{P\in E_1}\,\,\, &D(P\parallel U)  \\
\textrm{s.t.}\,\,\,& \sum\limits_{i = 1}^N { {p_i}{(1-u_i)} } \ge \alpha  \tag{\theequation a}\label{equ:PQ2 optization a}\\
& \sum\limits_{i=1}^N {p_i}=1 .\tag{\theequation b}\label{equ:PQ2 optization b}
\end{flalign}

\subsubsection{Noncommercial Systems}
A \textit{noncommercial system} is defined as a recommendation system where the revenue $R_p-C_p$ of recommending a piece of desired data is larger than the sum of revenue $R_{ad}-C_p-C_n$ of recommending an advertisement and the cost of not recommending a piece of desired data $C_m$.
    That is,
    \begin{equation}\label{condt:noncom_sys}
        \text{C}2:~~ R_p+C_n+C_m-R_{ad} >0.
    \end{equation}

Accordingly, the constraint $\widebar R(P)\ge \beta$ is equivalent to
\begin{equation}
    \sum\limits_{i = 1}^N { {p_i}{(1-u_i)} } \le \alpha.
\end{equation}

Therefore, the feasible set of recommendation distribution $P$ for  noncommercial systems can be expressed as
\begin{equation}\label{equ: E_condition2}
E_2=\Big\{P: \sum\nolimits_{i = 1}^N { {p_i}{(1-u_i)} } \le \alpha | R_{ad}<R_p+C_n+C_m \Big\}.
\end{equation}

Afterwards, we can solve the optimal recommendation distribution $P$ through the following optimization problem.
\begin{flalign}\label{equ:PQ3 optization}
\mathcal{P}_2: \,\, \arg \mathop {\min }\limits_{P\in E_2}\,\,\, &D(P\parallel U)  \\
\textrm{s.t.}\,\,\,& \sum\limits_{i = 1}^N { {p_i}{(1-u_i)} } \le \alpha  \tag{\theequation a}\label{equ:PQ3 optization a}\\
& \sum\limits_{i=1}^N {p_i}=1 .\tag{\theequation b}\label{equ:PQ3 optization b}
\end{flalign}

\subsubsection{Neutral Systems}
For the case  $R_p+C_n+C_m-R_{ad} =0$, the corresponding expected system revenue degrades to
\begin{equation}
    \widebar R(P) = R_p-C_p
\end{equation}
and is independent from the recommendation distribution $P$.
    As long as the target revenue satisfies $\beta<R_p-C_p$, the constraint $\widebar R(P)\ge \beta$ can be met by any recommendation distribution.
Therefore, the recommendation distribution can be chosen as $P=U$.

\subsubsection{Discussion of Systems}
Noncommercial systems usually appear in stage of expanding users. In this stage, in order to seize market share, the main tasks are attracting more users and making users have confidence in this recommendation systems by excellent user experience, and therefore the revenue from new users by recommending desired data should be larger than the revenue of recommending an advertisement and the cost of not recommending a piece of desired data. Generally, from a qualitative perspective, the data desired by users with high-probability is recommended with higher probability to make $\widebar R(P)$ larger in this case, since $\widebar R(P)$ decreases with increasing of $\sum\nolimits_{i=1}^N {p_i (1-u_i)}$. In this sense, the high-probability events are more important here.

Advertising systems is usually adopted in mature stage of applications where users are accustomed to the recommendation system and can put up with some advertisements. Here, the applications expect to earn more money by advertising. Thus, the revenue of recommending an advertisement is larger than other revenue and cost in advertising systems. From a qualitative perspective, in order to get more revenue, the data desired by users with small-probability (e.g., advertisements) should be recommended with high probability since $\widebar R(P)$ increases with increasing of $\sum\nolimits_{i=1}^N {p_i (1-u_i)}$. In this sense, the small-probability events are more important here.

\begin{rem}
Since $\Big\{P: \widebar R(P)\ge \beta  \Big\}=E_1 \cup E_2 \cup \Big\{P: \widebar R(P)\ge \beta | R_{ad}=R_p+C_n+C_m \Big\}$, these three kinds of recommendation systems cover all the cases of this problem. 
\end{rem}

\section{Optimal Recommendation Distribution}\label{sec: The optimal recommendation Distribution}
In this part, we shall present the optimal recommendation distribution for both advertising systems and noncommercial systems explicitly.
    We define an auxiliary variable and an auxiliary function as follows.
    \begin{align}
        \label{df:gamma_u}
        \gamma_u=&\sum\nolimits_{i = 1}^N { {u_i^2} }, \\
        \label{df:gwv}
        g(\varpi,V)=&\frac{\sum_{i=1}^N v_i^2 e^{\varpi (1-v_i)}}{\sum_{i=1}^N v_i e^{\varpi (1-v_i)}},
    \end{align}
    where $\varpi\in(-\infty, +\infty)$ is a constant and $V=\{v_1,v_2,\cdots,v_N\}$ is a general probability mass function.
    Actually, we have $\gamma_u = e^{-H_2(U)}$ where $H_2(U)$ is the R{\'{e}}nyi entropy $H_{\alpha}(\cdot)$ when $\alpha=2$ \cite{van2014renyi}.

In particular, we have the following lemma on $g(\varpi,V)$.
\begin{lem}\label{lem: monotonicity}
Function $g(\varpi,V)$ is monotonically decreasing with $\varpi$.
\end{lem}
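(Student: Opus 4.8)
The plan is to show that $g(\varpi,V)$ has nonpositive derivative in $\varpi$ by rewriting it as a conditional expectation and recognizing the derivative as a (negative) variance. Fix the probability vector $V=\{v_1,\dots,v_N\}$ and define weights $w_i(\varpi)=v_i e^{\varpi(1-v_i)}$, so that
\begin{equation}
g(\varpi,V)=\frac{\sum_{i=1}^N v_i\, w_i(\varpi)}{\sum_{i=1}^N w_i(\varpi)}.
\end{equation}
This is exactly $\mathbb{E}_{\varpi}[X]$, where $X$ is the random variable taking value $v_i$ with probability $w_i(\varpi)/\sum_j w_j(\varpi)$. Note that $\partial_\varpi \log w_i(\varpi)=1-v_i=-X_i+\text{const}$ along the $i$-th atom, so differentiating a ratio of this exponential-family form should produce a covariance.

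First I would carry out the differentiation directly. Writing $A(\varpi)=\sum_i v_i^2 e^{\varpi(1-v_i)}$ and $B(\varpi)=\sum_i v_i e^{\varpi(1-v_i)}$, we have $g=A/B$, $A'=\sum_i v_i^2(1-v_i)e^{\varpi(1-v_i)}$ and $B'=\sum_i v_i(1-v_i)e^{\varpi(1-v_i)}$. Then
\begin{equation}
\frac{\partial g}{\partial \varpi}=\frac{A'B-AB'}{B^2}.
\end{equation}
The key step is to recognize the numerator: using $(1-v_i)=-v_i+1$ and expanding, $A'B-AB'$ collapses to $-\bigl(B\sum_i v_i^3 e^{\varpi(1-v_i)}-A\sum_i v_i^2 e^{\varpi(1-v_i)}\bigr)$, i.e. $-B^2\bigl(\mathbb{E}_\varpi[X^2]-(\mathbb{E}_\varpi[X])^2\bigr)=-B^2\operatorname{Var}_\varpi(X)$. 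Hence $\partial_\varpi g=-\operatorname{Var}_\varpi(X)\le 0$, which gives the claimed monotonicity; strict decrease holds unless all $v_i$ with $v_i>0$ are equal.

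The main obstacle is purely bookkeeping: making the cancellation in $A'B-AB'$ transparent so that the variance structure is visible rather than buried in a double sum over indices $i,j$. The cleanest route is to expand $A'B-AB'=\sum_{i,j}\bigl(v_i^2(1-v_i)v_j - v_i^2 v_j(1-v_j)\bigr)e^{\varpi(1-v_i)}e^{\varpi(1-v_j)} = \sum_{i,j} v_i^2 v_j (v_j-v_i)\, e^{\varpi(2-v_i-v_j)}$, then symmetrize over the pair $(i,j)$ to obtain $-\tfrac12\sum_{i,j} v_i v_j (v_i-v_j)^2 \cdot \tfrac{v_i+v_j}{?}$ — more simply, pair this against $B^2=\sum_{i,j} v_i v_j e^{\varpi(2-v_i-v_j)}$ and conclude via the standard identity $\mathbb{E}[X^2]-\mathbb{E}[X]^2\ge0$. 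I would also remark that the degenerate cases (some $v_i=0$, or $N=1$) are trivial since then $g$ is constant in $\varpi$, so the monotonicity is (weakly) maintained. Everything else is routine calculus once the covariance identity is in place.
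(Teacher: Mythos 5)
Your proof is correct, and it establishes the sign of $\partial_\varpi g$ by a slightly different (and cleaner) route than the paper. Both you and the paper differentiate the quotient $g=A/B$ and study the numerator $A'B-AB'$; the paper expands it into the double sum $\sum_{i,j}v_i^2v_j(v_j-v_i)e^{\varpi(2-v_i-v_j)}$ and symmetrizes over $(i,j)$ to exhibit it as $-\sum_{v_i>v_j}v_iv_j(v_i-v_j)^2e^{\varpi(2-v_i-v_j)}\le 0$, whereas you collapse it in one line to $A^2-CB$ with $C=\sum_i v_i^3e^{\varpi(1-v_i)}$ and recognize $(CB-A^2)/B^2$ as the variance of the tilted random variable $X$ with $\Pr\{X=v_i\}\propto v_ie^{\varpi(1-v_i)}$, i.e.\ you invoke Cauchy--Schwarz/Jensen instead of pairwise index manipulation. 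The two arguments are algebraically equivalent (the paper's symmetrized double sum is exactly the identity $\operatorname{Var}(X)=\tfrac12\sum_{i,j}p_ip_j(x_i-x_j)^2$), but your exponential-family packaging makes the nonnegativity immediate and also isolates precisely when equality occurs: $\partial_\varpi g=0$ iff all $v_i$ with $v_i>0$ coincide. This is in fact more careful than the strict inequality asserted at the end of the paper's proof, which fails for degenerate $V$ (e.g.\ uniform), though that edge case is immaterial to how the lemma is used.

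One small repair: your parenthetical attempt at the symmetrization is garbled --- the expression $-\tfrac12\sum_{i,j} v_i v_j (v_i-v_j)^2 \cdot \tfrac{v_i+v_j}{?}$ should carry no extra factor, the correct symmetrized form being $-\tfrac12\sum_{i,j} v_i v_j (v_i-v_j)^2 e^{\varpi(2-v_i-v_j)}$. Since your main line of argument (the collapse to $A^2-CB$ and the variance identity) does not rely on that aside, this slip does not affect the validity of the proof.
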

\begin{proof}
Refer to Appendix \ref{app1}.
\end{proof}

\begin{lem}\label{lem: special case}
 $g(0,V)=\sum\nolimits_{i = 1}^N  {v_i^2}$, $g(-\infty,V)=v_{\max}$, and $g(+\infty,V)=v_{\min}$, where $v_{\max}=\max\{v_1, v_2,\cdots,v_N\}$ and $v_{\min}=\min\{v_1, v_2,\cdots,v_N\}$.
\end{lem}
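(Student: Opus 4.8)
The plan is to treat the three claims separately: evaluate $g(\varpi,V)$ at $\varpi=0$ by direct substitution, and obtain the two limits $\varpi\to\pm\infty$ by pulling out the dominant exponential factor from the numerator and the denominator.

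For $\varpi=0$, every factor $e^{\varpi(1-v_i)}$ collapses to $1$, so that
\[
g(0,V)=\frac{\sum_{i=1}^N v_i^2}{\sum_{i=1}^N v_i}=\sum_{i=1}^N v_i^2,
\]
where the last equality uses that $V$ is a probability mass function, i.e.\ $\sum_{i=1}^N v_i=1$.

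For $\varpi\to-\infty$, I would rewrite $e^{\varpi(1-v_i)}=e^{\varpi(1-v_{\max})}\,e^{\varpi(v_{\max}-v_i)}$ and cancel the common factor $e^{\varpi(1-v_{\max})}$, leaving
\[
g(\varpi,V)=\frac{\sum_{i=1}^N v_i^2\,e^{\varpi(v_{\max}-v_i)}}{\sum_{i=1}^N v_i\,e^{\varpi(v_{\max}-v_i)}}.
\]
Since $v_{\max}-v_i>0$ whenever $v_i\neq v_{\max}$, those terms tend to $0$ as $\varpi\to-\infty$, while the terms with $v_i=v_{\max}$ stay equal to $1$; hence both sums are dominated by the maximizing indices, and with $k=|\{i:v_i=v_{\max}\}|$ the ratio tends to $k v_{\max}^2/(k v_{\max})=v_{\max}$. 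The case $\varpi\to+\infty$ is entirely symmetric: factoring out $e^{\varpi(1-v_{\min})}$ leaves exponents $\varpi(v_{\min}-v_i)\le 0$ that vanish except on $\{i:v_i=v_{\min}\}$, so $g(+\infty,V)=v_{\min}$.

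The only point requiring a little care is that $v_{\max}$ (resp.\ $v_{\min}$) may be attained by several indices, but the common multiplicity cancels between numerator and denominator, so this is not a genuine obstacle; there are no estimates to fight here, only careful bookkeeping of which exponential term dominates. As a sanity check, combining this lemma with the monotonicity of Lemma~\ref{lem: monotonicity} gives $v_{\max}=g(-\infty,V)\ge g(0,V)\ge g(+\infty,V)=v_{\min}$, i.e.\ $v_{\min}\le\sum_{i=1}^N v_i^2\le v_{\max}$, which is clearly true for any probability vector.
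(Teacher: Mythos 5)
Your proof is correct and follows essentially the same route as the paper's: direct substitution at $\varpi=0$, and for $\varpi\to\pm\infty$ factoring out the dominant exponential so that all non-extremal terms vanish. Your explicit bookkeeping of the multiplicity of $v_{\max}$ (resp.\ $v_{\min}$) is a minor refinement the paper glosses over, but it does not change the argument.
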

\begin{proof}
Refer to the Appendix \ref{app2}
\end{proof}

It is clear that we have $g(0,P)=\gamma_p$ and $g(0,U)=\gamma_u$.

\subsection{Optimal Advertising System}
\begin{thm}\label{thm: Optimal P in E1}
For an advertising system with $R_p+C_n+C_m-R_{ad} <0$, the optimal recommendation distribution is the solution of Problem $\mathcal{P}_1$ and is given by
\begin{equation}\label{equ: res of Optimal P in E1}
p^*_x  =\left\{
   \begin{aligned}
   & \quad\quad u_x   &&\text{if}~~\alpha \le 1-\gamma_u  \\
   & \frac{u_x e^{\varpi^*(1-u_x)}}{\sum\nolimits_{i=1}^N {u_i e^{\varpi^* (1-u_i)}}} &&\text{if}~~ 1-\gamma_u <\alpha \le1\\
   & \quad\quad \text{NaN}   &&\text{if}~~1\le \alpha   \\
   \end{aligned}
   \right.
\end{equation}
for $1\le x\le N$, where $\alpha$ is defined in \eqref{df:alpha}, $\gamma_u$ is defined in \eqref{df:gamma_u}, NaN means no solution exists, and $\varpi^*>0$ is the solution to $g(\varpi, U)  = 1-\alpha$.
\end{thm}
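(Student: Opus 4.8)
The plan is to solve Problem $\mathcal{P}_1$ by the method of Lagrange multipliers, treating the inequality constraint \eqref{equ:PQ2 optization a} together with the normalization constraint \eqref{equ:PQ2 optization b}, and then splitting into cases according to whether the revenue constraint is active. First I would observe that $D(P\parallel U)=\sum_i p_i\log(p_i/u_i)$ is strictly convex in $P$ and that the feasible set $E_1$ is a nonempty closed convex polytope exactly when $\alpha\le 1$; since $\sum_i p_i(1-u_i)\le 1-\min_i u_i \le 1$ for any distribution $P$, and the value $1$ is approached only degenerately, no feasible $P$ exists once $\alpha>1$ (and one checks $\alpha=1$ is likewise infeasible unless $U$ is degenerate), which gives the \textrm{NaN} branch. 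For $\alpha\le 1$ the problem is a convex program with a unique minimizer characterized by the KKT conditions.

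Next I would write the Lagrangian $L(P,\varpi,\lambda)=\sum_i p_i\log\frac{p_i}{u_i}-\varpi\bigl(\sum_i p_i(1-u_i)-\alpha\bigr)+\lambda\bigl(\sum_i p_i-1\bigr)$ with $\varpi\ge 0$ the multiplier for \eqref{equ:PQ2 optization a}. Setting $\partial L/\partial p_x=0$ gives $\log(p_x/u_x)+1-\varpi(1-u_x)+\lambda=0$, i.e. $p_x\propto u_x e^{\varpi(1-u_x)}$; normalizing yields exactly $p_x=f(\varpi,x,U)=u_x e^{\varpi(1-u_x)}/\sum_i u_i e^{\varpi(1-u_i)}$. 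Complementary slackness then splits the analysis: either $\varpi=0$, forcing $p_x=u_x$, which is feasible precisely when $\sum_i u_i(1-u_i)=1-\gamma_u\ge\alpha$ — this is the first branch; or $\varpi>0$ and the revenue constraint is tight, $\sum_i p_i(1-u_i)=\alpha$. Substituting the exponential form into this equality, and using the definitions \eqref{df:gamma_u}--\eqref{df:gwv}, the tightness condition becomes $1-g(\varpi,U)=\alpha$, i.e. $g(\varpi,U)=1-\alpha$, which is the equation defining $\varpi^*$ in the statement.

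The remaining point is to show this equation has a unique positive root exactly when $1-\gamma_u<\alpha\le 1$, and here Lemmas~\ref{lem: monotonicity} and~\ref{lem: special case} do the work: $g(\cdot,U)$ is continuous and strictly decreasing by Lemma~\ref{lem: monotonicity}, with $g(0,U)=\gamma_u$ and $g(+\infty,U)=u_{\min}$ by Lemma~\ref{lem: special case}. Hence as $\varpi$ ranges over $(0,+\infty)$, $g(\varpi,U)$ ranges over $(u_{\min},\gamma_u)$, so $g(\varpi,U)=1-\alpha$ has a (unique) solution $\varpi^*>0$ iff $u_{\min}<1-\alpha<\gamma_u$, i.e. $1-\gamma_u<\alpha<1-u_{\min}$; one checks the boundary $\alpha=1$ and the gap up to $1-u_{\min}$ are handled by the degenerate/infeasible discussion so that the stated range $1-\gamma_u<\alpha\le1$ is correct, with feasibility of the tight solution confirmed since $p^*$ is a genuine probability vector and $\sum_i p^*_i(1-u_i)=\alpha$ by construction. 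I expect the main obstacle to be the careful bookkeeping at the threshold values $\alpha=1-\gamma_u$ and $\alpha=1$ — verifying that the exponential-form solution continuously matches $p=U$ as $\varpi\to 0^+$, and pinning down exactly why no feasible distribution survives at $\alpha=1$ — rather than the Lagrangian computation itself, which is routine.
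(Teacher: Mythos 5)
Your proposal follows essentially the same route as the paper's proof: Lagrangian/KKT stationarity giving $p_x\propto u_x e^{\varpi(1-u_x)}$, complementary slackness separating the inactive case $P=U$ (when $\alpha\le 1-\gamma_u$) from the active case $g(\varpi^*,U)=1-\alpha$, and the monotonicity and limit results (Lemmas \ref{lem: monotonicity} and \ref{lem: special case}) to conclude $\varpi^*>0$. Your attention to the upper boundary --- observing that $g(\varpi,U)$ only decreases to $u_{\min}$, so the exponential-form solution genuinely exists only for $\alpha<1-u_{\min}$ --- is if anything slightly more careful than the paper itself, whose remark defining $\beta_{ad}$ implicitly concedes the same point.
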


\begin{proof}
First, if $\alpha>1$, no solution exists since $\sum\nolimits_{i = 1}^N { {p_i}{(1-u_i)} }$ is always smaller than one and the constraint \eqref{equ:PQ2 optization a} can never be satisfied.

Second, if $\alpha \le 1-\gamma_u$, we have $\sum\nolimits_{i = 1}^N { {p_i}{(1-u_i)} }\ge \sum\nolimits_{i = 1}^N { {u_i}{(1-u_i)} } =1-\gamma_u \ge \alpha$.
    That is, the constraint \eqref{equ:PQ2 optization a} could by satisfied with any distribution $P$.
Thus, the solution to Problem $\mathcal{P}_1$ would be $P=U$, which minimizes $D(P||U)$.

Third, if $1-\gamma_u <\alpha \le1$, we shall solve Problem $\mathcal{P}_1$ based on the following Karush-Kuhn-Tucher (KKT) conditions.
\begin{flalign}\label{equ:PQ4 optization}
{\nabla _P}L(P,\lambda,\mu)={\nabla _P} \left({\sum\limits_{i=1}^N {p_i\ln{{p_i} \over {u_i}}}  + \lambda\left({ \alpha-\sum\limits_{i=1}^N {p_i(1-u_i)}}\right)  + \mu \left( {\sum\limits_{i=1}^N {p_i}-1}\right) }\right)=&0  \\
\lambda (\alpha -\sum\limits_{i = 1}^N { {p_i}{(1-u_i)} })=&0  \tag{\theequation a}\label{equ:PQ4 optization a}\\
\sum\limits_{i=1}^N {p_i}-1=&0 \tag{\theequation b}\label{equ:PQ4 optization b}\\
 \alpha-\sum\limits_{i = 1}^N { {p_i}{(1-u_i)} } \le& 0 \tag{\theequation c}\label{equ:PQ4 optization c}\\
  \lambda \ge&0  \tag{\theequation d}\label{equ:PQ4 optization d}
\end{flalign}
Differentiating ${\nabla _P}L(P,\lambda,\mu)$ with respect to $p^*_x$ and set it to zero, we have
\begin{equation}
 \ln p^*_x + 1 - \ln u_x - \lambda (1 - u_x) + \mu =0,
\end{equation}
  and thus $p^*_x=u_x e^{\lambda(1-u_x)-\mu-1}$.
    Together with constraint \eqref{equ:PQ4 optization b}, we further have $e^{\mu+1}=\sum\nolimits_{i=1}^N {u_ie^{\lambda (1-u_i)}}$ and
\begin{equation}\label{equ:PQ4_res}
p^*_x=\frac{u_x e^{\lambda(1-u_x)}}{\sum\nolimits_{i=1}^N {u_i e^{\lambda (1-u_i)}}},
\end{equation}
where $\lambda\ge 0$ is the solution to $\sum\nolimits_{i = 1}^N { {p^*_i}{(1-u_i)} } =\alpha$, i.e., $g(\lambda, U)=1-\alpha$.
By substituting $\lambda$ with $\varpi^*$, the desired result in \eqref{equ: res of Optimal P in E1} would be obtained.

The condition $1-\gamma_u<\alpha \le 1$ implies  $g(\varpi^*, U)=1-\alpha\le \gamma_u=g(0, U)$.
    Since $g(\varpi, U)$ has been shown to be monotonically decreasing with $\varpi$ in Lemma \ref{lem: monotonicity}, we then have $\varpi^*>0$.
This completes the proof of Theorem \ref{thm: Optimal P in E1}.
\end{proof}

\begin{rem}
    We denote
    \begin{align}\label{df:beta_star}
    \beta_0=& (1-\gamma_u)(R_{ad}-R_p-C_n-C_m)+R_p-C_p, \\
    \label{df:beta_1}
    \beta_{ad}=&-(R_{ad}-R_p-C_n-C_m)u_{\min}+R_{ad}-C_p-C_n-C_m,
    \end{align}  and have
    \begin{itemize}
      \item $\alpha \le 1-\gamma_u$ is equivalent to $\beta\le \beta_0$, which means that the target revenue is low and can be achieved by pushing data according to the preference of the user exactly.
      \item $1-\gamma_u<\alpha \le 1$ is equivalent to $\beta_0<\beta\le\beta_{ad}$, which means that the target revenue can only be achieved when some advertisements are pushed according to probability $p^*_x$.
      \item $1<\alpha$ is equivalent to $\beta>\beta_{ad}$, which means that the target is too high to achieve, even when advertisements are pushed with probability one.
    \end{itemize}
\end{rem}

\subsection{Optimal Noncommercial System}
\begin{thm}\label{thm: Optimal P in E2}
For a noncommercial system with $R_p+C_n+C_m-R_{ad} >0$, the optimal recommendation distribution is the solution of Problem $\mathcal{P}_2$ and is given by
\begin{equation}\label{equ: res of Optimal P in E2}
p^*_x  =\left\{
   \begin{aligned}
   & \quad\quad \text{NaN}  &&\text{if}~~\alpha< 0  \\
   & \frac{u_x e^{\varpi^*(1-u_x)}}{\sum\nolimits_{i=1}^N {u_i e^{\varpi^* (1-u_i)}}} && \text{if}~~0 \le \alpha <1-\gamma_u  \\
   & \quad\quad u_x  && \text{if}~~ 1-\gamma_u \le\alpha
   \end{aligned}
   \right. ,
\end{equation}
for $1\le x\le N$, where $\alpha$ is defined in \eqref{df:alpha}, $\gamma_u$ is defined in \eqref{df:gamma_u}, NaN means no solution exists, and $\varpi^*<0$ is the solution to $g(\varpi, U)  = 1-\alpha$.
\end{thm}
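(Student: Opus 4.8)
\emph{Proof proposal.} The plan is to mirror, essentially line for line, the proof of Theorem~\ref{thm: Optimal P in E1}, the only structural change being that condition C2 has the opposite sign of $R_{ad}-R_p-C_n-C_m$, so the equivalent form of the revenue constraint is $\sum_{i=1}^N p_i(1-u_i)\le\alpha$ (as already recorded in \eqref{equ: E_condition2}) instead of $\ge\alpha$. First I would note that $\sum_{i=1}^N p_i(1-u_i)=1-\sum_{i=1}^N p_iu_i$ ranges exactly over $[\,1-u_{\max},\,1-u_{\min}\,]$ as $P$ runs over the probability simplex, and then split into the three $\alpha$-regimes appearing in \eqref{equ: res of Optimal P in E2}.

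The two easy regimes go as follows. When $\alpha$ is smaller than the minimal attainable value of $\sum_i p_i(1-u_i)$ — in particular whenever $\alpha<0$, since that sum is nonnegative — the feasible set $E_2$ is empty and no solution exists; this is the exact analogue of the ``$\alpha>1$'' case of Theorem~\ref{thm: Optimal P in E1}. When $1-\gamma_u\le\alpha$, the utility distribution itself satisfies $\sum_i u_i(1-u_i)=1-\gamma_u\le\alpha$, so $U\in E_2$, and since $D(P\parallel U)\ge 0$ with equality iff $P=U$, the minimizer is $p_x^*=u_x$.

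The substance is the middle regime $0\le\alpha<1-\gamma_u$. Here $D(P\parallel U)$ is convex in $P$ and the feasible region is a polytope cut out by affine constraints, so the KKT conditions are necessary \emph{and} sufficient for a global optimum — this is the structural fact that upgrades the stationary point to the true minimizer, and I would state it explicitly. Forming the Lagrangian with multiplier $\lambda\ge0$ on $\sum_i p_i(1-u_i)-\alpha\le 0$ and $\mu$ on the normalization, stationarity in $p_x$ gives $\ln p_x^*+1-\ln u_x+\lambda(1-u_x)+\mu=0$, hence $p_x^*\propto u_x e^{-\lambda(1-u_x)}$; imposing $\sum_i p_i^*=1$ and setting $\varpi^*=-\lambda$ produces exactly the MIM form $p_x^*=\frac{u_x e^{\varpi^*(1-u_x)}}{\sum_i u_i e^{\varpi^*(1-u_i)}}$. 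Complementary slackness makes the revenue constraint active once $\lambda>0$, i.e. $\sum_i p_i^*(1-u_i)=\alpha$, which is equivalent to $g(\varpi^*,U)=1-\alpha$ — the same defining equation as in Theorem~\ref{thm: Optimal P in E1}. Finally, $0\le\alpha<1-\gamma_u$ forces $g(\varpi^*,U)=1-\alpha>\gamma_u=g(0,U)$, so by the monotone decrease of $g$ (Lemma~\ref{lem: monotonicity}) together with $g(0,U)=\gamma_u$ (Lemma~\ref{lem: special case}) we get $\varpi^*<0$, equivalently $\lambda>0$, consistent with an active constraint, which closes the argument.

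I expect the main obstacle to be the sign bookkeeping and the feasibility boundary rather than any deep difficulty. The multiplier now enters stationarity with $+\lambda(1-u_x)$ instead of $-\lambda(1-u_x)$, and it is precisely this that flips $\varpi^*$ to be negative; one must check that the $\lambda$ recovered from $g(\varpi^*,U)=1-\alpha$ is genuinely $\ge 0$, which is exactly where monotonicity of $g$ is used. On the feasibility side, strictly speaking $E_2\ne\emptyset$ requires $\alpha\ge 1-u_{\max}$ (itself $\ge 0$), so the ``NaN for $\alpha<0$'' branch should be read as a sub-case of this, and the degenerate endpoint $\alpha=1-u_{\max}$ — where $\varpi^*\to-\infty$ and $p^*$ collapses, by Lemma~\ref{lem: special case}, to a point mass on $\argmax_i u_i$ — would deserve a separate line. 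A minor technical point to dispatch is the differentiability of $\ln p_x$ on the boundary of the simplex: the obtained $p_x^*\propto u_x e^{-\lambda(1-u_x)}$ is strictly positive whenever $u_x>0$, and classes with $u_x=0$ can be removed at the outset since any $P$ with $p_x>0=u_x$ has $D(P\parallel U)=\infty$.
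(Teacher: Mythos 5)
Your proposal follows essentially the same route as the paper's own proof: the identical three-way split on $\alpha$ (infeasibility for $\alpha<0$, the trivial minimizer $P=U$ when $1-\gamma_u\le\alpha$, and KKT in between), the same stationarity computation giving $p_x^*\propto u_x e^{-\lambda(1-u_x)}$ followed by the substitution $\varpi^*=-\lambda$, and the same use of Lemma~\ref{lem: monotonicity} and Lemma~\ref{lem: special case} to conclude $\varpi^*<0$ from $g(\varpi^*,U)=1-\alpha>\gamma_u=g(0,U)$. Your additional remarks---that convexity makes the KKT conditions sufficient, and that feasibility of $E_2$ actually requires $\alpha\ge 1-u_{\max}$, so the sub-regime $0\le\alpha<1-u_{\max}$ (where $g(\varpi,U)=1-\alpha$ has no root) deserves explicit treatment---are sensible refinements of points the paper passes over silently, not a different argument.
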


\begin{proof}
First, if $\alpha<0$, no solution exists since $\sum\nolimits_{i = 1}^N { {p_i}{(1-u_i)} }$ is positive and cannot be smaller a negative number, and thus the constraint \eqref{equ:PQ3 optization b} can never be satisfied.

Second, if $\alpha \ge 1-\gamma_u$ and we set $p_x^*=u_x$, we would have $\sum\nolimits_{i = 1}^N { {p^*_i}{(1-u_i)} }=1-\gamma_u \le \alpha$, i.e., constraint \eqref{equ:PQ3 optization a} is satisfied.
    Since setting $p_x^*=u_x$ minimizes $D(P||U)$, $P=U$ should be the solution of Problem 2.

Third, if $0\le \alpha < 1-\gamma_u$, we shall solve Problem $\mathcal{P}_2$ using the following KKT conditions.
\begin{flalign}\label{equ:PQ5 optization}
{\nabla _P}L(P,\lambda,\mu)={\nabla _P} \left({\sum\limits_{i=1}^N {p_i\ln{{p_i} \over {u_i}}}  + \lambda\left({ \sum\limits_{i=1}^N {p_i(1-u_i)}-\alpha }\right)  + \mu \left( {\sum\limits_{i=1}^N {p_i}-1}\right) }\right)=&0  \\
\lambda (\sum\limits_{i = 1}^N { {p_i}{(1-u_i)} }-\alpha)=&0  \tag{\theequation a}\label{equ:PQ5 optization a}\\
\sum\limits_{i=1}^N {p_i}-1=&0 \tag{\theequation b}\label{equ:PQ5 optization b}\\
 \sum\limits_{i = 1}^N { {p_i}{(1-u_i)} } -\alpha \le& 0 \tag{\theequation c}\label{equ:PQ5 optization c}\\
  \lambda \ge&0  \tag{\theequation d}\label{equ:PQ5 optization d}
\end{flalign}

Differentiating ${\nabla _P}L(P,\lambda,\mu)$ with respect to $p^*_x$ and set it to zero, we have
\begin{equation}
 \ln p^*_x + 1 - \ln u_x + \lambda (1 - u_x) + \mu = 0,
\end{equation}
 and$p^*_x=u_x e^{-\lambda(1-u_x)-\mu-1}$.
    Together with constraint \eqref{equ:PQ5 optization b}), we then have $e^{\mu+1}=\sum\nolimits_{i=1}^N {u_ie^{-\lambda (1-u_i)}}$ and
\begin{equation}\label{equ:PQ5_res}
p^*_x=\frac{u_x e^{-\lambda(1-u_x)}}{\sum\nolimits_{i=1}^N {u_i e^{-\lambda (1-u_i)}}},
\end{equation}
where $\lambda>0$ is the solution to $\sum\nolimits_{i = 1}^N { {p^*_i}{(1-u_i)} } =\alpha$.

By denoting $\varpi^*=-\lambda$, \eqref{equ:PQ5_res} turns to be $p^*_x=\frac{u_x e^{\varpi^*(1-u_x)}}{\sum\nolimits_{i=1}^N {u_i e^{\varpi^* (1-u_i)}}}$ which is the desired result in \eqref{equ: res of Optimal P in E2}.

 Moreover, the condition $0\le \alpha < 1-\gamma_u$ implies $g(\varpi^*, U)=1-\alpha\ge \gamma_u=g(0,U)$.
    Since $g(\varpi,U)$ is monotonically decreasing with $\varpi$ (cf. Lemma \ref{lem: monotonicity}), we see that $\varpi^*<0$.
 Thus, Theorem \ref{thm: Optimal P in E2} is proved.
\end{proof}

\begin{rem}
        We denote
    \begin{align}\label{df:beta_0}
        \beta_{no}=-(R_{ad}-R_p-C_n-C_m)u_{\max}+R_{ad}-C_p-C_n-C_m,
    \end{align}  and have the following observations.
    \begin{itemize}
      \item $\alpha < 0$ is equivalent to $\beta\ge \beta_{no}$, which means that the target revenue is high and cannot be achieved by any recommendation distribution.
      \item $0\le\alpha < 1-\gamma_u$ is equivalent to $\beta_0<\beta\le\beta_{no}$, which means that the target revenue is not too high and the information is pushed according to probability $p^*_x$. The user experience is limited by the target revenue in this case.
      \item $1-\gamma_u\le\alpha$ is equivalent to $\beta\le\beta_0$, which can be achieved by pushing data according to the preference of the user exactly.
    \end{itemize}
\end{rem}

\subsection{Short Summary}\label{sec: short summary}
We further denote
\begin{equation}
    \beta_{ne} = R_{p}-C_p,
\end{equation}
 the optimal recommendation distributions for various systems and various target revenues can then be summarized in Table \ref{tab:result3}.
\begin{table*}[htbp]
\centering
    \caption{The optimal recommendation distribution.}\label{tab:result3}
\begin{tabular}{|c|c|c|c|c|}
\bottomrule
        & Case &$\beta$ & $\alpha$& $p^*_x$ \\
\hline
\multirow{3}{*}{Advertising system}  &\textcircled{1}&$\beta \le \beta_0$& $\alpha \le 1-\gamma$&$u_x $\\
\cline{2-5}
& \textcircled{2} &$\beta_0<\beta \le \beta_{ad}$ &  $1-\gamma <\alpha \le 1$ & $ \frac{u_x e^{\varpi(1-u_x)}}{\sum\nolimits_{i=1}^N {u_i e^{\varpi (1-u_i)}}}$\\
\cline{2-5}
  &\textcircled{3}& $ \beta > \beta_{ad}$ &$\alpha >1$& NaN\\
\hline
\multirow{2}{*}{Neutral system}  &\textcircled{4} &$\beta \le \beta_{ne}$  & NaN &$u_x$\\
\cline{2-5}
  &\textcircled{5}& $ \beta > \beta_{ne}$ &NaN & NaN\\
\hline
\multirow{3}{*}{Noncommercial system}  &\textcircled{6}& $\beta_0 <\beta \le \beta_{no} $&$0 \le \alpha < 1-\gamma$ &$\frac{u_x e^{\varpi(1-u_x)}}{\sum\nolimits_{i=1}^N {u_i e^{\varpi (1-u_i)}}} $\\
\cline{2-5}
& \textcircled{7}& $\beta\le\beta_0  $ &  $ \alpha \ge 1-\gamma$ & $ u_x$\\
\cline{2-5}
&\textcircled{8}  & $ \beta > \beta_{no}$ &$\alpha <0$& NaN\\
\toprule
\end{tabular}
\end{table*}

Cases \textcircled{3}, \textcircled{5}, and \textcircled{8} are extreme cases where the target revenue is beyond the reach of the system.
    For cases \textcircled{1}, \textcircled{4}, and \textcircled{7}, the target revenue is low ($\beta<\beta_{no}$ or $\beta<\beta_{ad}$), and thus is easy to achieve.
In particular, the constraints \eqref{equ:PQ2 optization a} and \eqref{equ:PQ3 optization a} are actually inactive.
    Thus, the optimal recommendation distribution would be exactly the same with the utility distribution.

Cases \textcircled{2} and \textcircled{6} are more practical and meaningful due to the appropriate target revenues used.
To further study the property of the optimal recommendation distribution of these two cases, the following function is introduced,
\begin{equation}\label{equ: Pc_definition1}
f(\varpi,x, U)  = \frac{u_x e^{\varpi(1-u_x)}}{\sum\nolimits_{i=1}^N {u_i e^{\varpi (1-u_i)}}},
\end{equation}
where $\varpi \in(-\infty,+\infty)$.

In doing so, the optimal recommendation distribution of cases \textcircled{2} and \textcircled{6} can be jointly expressed as
\begin{equation}
    p_x^* = f(\varpi^*,x, U),
\end{equation}
where $\varpi^*$ is the solution to $g(\varpi, U)=1-\alpha$.

    In particular, $f(\varpi^*,x, U)$ presents the optimal solution of case \textcircled{2} if $\varpi^*>0$ and presents the solution of case \textcircled{6} if $\varpi^*<0$.
Moreover, when $\varpi=0$, we have
\begin{equation}\label{equ: f_zero}
f(0,x, U)=u_x,
\end{equation}
which can be considered as the solution to cases \textcircled{1}, \textcircled{4}, and \textcircled{7}.

\section{Property of Optimal Recommendation Distributions}\label{sec: property}
In this section, we shall investigate how the optimal recommendation distribution diverges with respect to the utility distribution, in various systems and under various target revenue constraints.
    To do so, we first study the property of function $f(\varpi,x, U)$, where $\varpi_x$ and $\widetilde{\varpi}_x\neq0$ are, respectively, the solution to the following equations
    \begin{align}
        u_x &= g(\varpi, U),\\
        \label{df:tild_w_x}
        u_x &= f(\varpi, x, U).
    \end{align}

%

\subsection{Monotonicity of $f(\varpi,x, U)$}\label{sec:monotonicity}

\begin{thm}\label{thm: monotonicity}
$f(\varpi,x, U)$ has the following properties.
\begin{enumerate}
  \item[1)] $f(\varpi,x, U)$ is monotonically increasing with $\varpi$ in $(-\infty, \varpi_x)$;
  \item[2)] $f(\varpi,x, U)$ is monotonically decreasing with $\varpi$ in $(\varpi_x, +\infty)$;
  \item[3)] $\varpi_x$ is decreasing with $u_x$, i.e, $\varpi_y<\varpi_x$ if $u_y>u_x$;
  \item[4)] $\varpi_x<0$ if $u_x>\gamma_u$; \\
            $\varpi_x=0$ if $u_x=\gamma_u$; \\
            $\varpi_x>0$ if $u_x<\gamma_u$.
\end{enumerate}
\end{thm}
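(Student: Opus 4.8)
The plan is to reduce all four claims to the single identity governing how $f$ varies with $\varpi$. Writing $Z(\varpi)=\sum_{i=1}^N u_i e^{\varpi(1-u_i)}$ so that $f(\varpi,x,U)=u_x e^{\varpi(1-u_x)}/Z(\varpi)$, I would differentiate the logarithm and use $Z'(\varpi)=\sum_{i=1}^N u_i(1-u_i)e^{\varpi(1-u_i)}=Z(\varpi)-\sum_{i=1}^N u_i^2 e^{\varpi(1-u_i)}$, which gives $Z'(\varpi)/Z(\varpi)=1-g(\varpi,U)$ and hence
\begin{equation}
\frac{\partial}{\partial\varpi}\ln f(\varpi,x,U)=(1-u_x)-\frac{Z'(\varpi)}{Z(\varpi)}=g(\varpi,U)-u_x .
\end{equation}
Since $f>0$, the sign of $\partial f/\partial\varpi$ coincides with the sign of $g(\varpi,U)-u_x$. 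This identity is the engine of the whole proof.

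For parts 1) and 2): by Lemma~\ref{lem: monotonicity}, $g(\varpi,U)$ is decreasing in $\varpi$, and by Lemma~\ref{lem: special case} it decreases from $u_{\max}$ (as $\varpi\to-\infty$) to $u_{\min}$ (as $\varpi\to+\infty$); therefore there is a unique $\varpi_x$ solving $g(\varpi_x,U)=u_x$, with the convention $\varpi_x=-\infty$ when $u_x=u_{\max}$ and $\varpi_x=+\infty$ when $u_x=u_{\min}$. For $\varpi<\varpi_x$ one has $g(\varpi,U)>u_x$, so $\partial f/\partial\varpi>0$ and $f$ is increasing; for $\varpi>\varpi_x$ one has $g(\varpi,U)<u_x$, so $\partial f/\partial\varpi<0$ and $f$ is decreasing, which is exactly the claimed behavior on $(-\infty,\varpi_x)$ and $(\varpi_x,+\infty)$.

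For part 3): the map $\varpi\mapsto g(\varpi,U)$ is a strictly decreasing bijection of $\mathbb{R}$ onto $(u_{\min},u_{\max})$, so its inverse is strictly decreasing; as $\varpi_x$ is exactly this inverse evaluated at $u_x$, $u_y>u_x$ forces $\varpi_y<\varpi_x$. For part 4): Lemma~\ref{lem: special case} gives $g(0,U)=\sum_{i=1}^N u_i^2=\gamma_u$, so comparing $g(\varpi_x,U)=u_x$ with $g(0,U)=\gamma_u$ and using the monotonicity of $g(\cdot,U)$ yields $\varpi_x<0\iff u_x>\gamma_u$, $\varpi_x=0\iff u_x=\gamma_u$, and $\varpi_x>0\iff u_x<\gamma_u$.

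The calculations are routine; the two places needing care are the algebraic step $Z'/Z=1-g(\varpi,U)$, which is precisely what couples $f$ to $g$, and the boundary and degenerate situations. When all the $u_i$ are equal, $f$ is constant in $\varpi$ and the statement is vacuous; when $u_x$ equals $u_{\max}$ or $u_{\min}$, one of the two monotonic intervals in 1)--2) is empty and $\varpi_x$ must be read as $\pm\infty$. Finally, part 3) needs $\varpi\mapsto g(\varpi,U)$ to be injective, i.e.\ strictly (not merely weakly) decreasing; if Lemma~\ref{lem: monotonicity} is only available in weak form I would reinforce it with the same variance-type argument to obtain strictness away from the all-equal case.
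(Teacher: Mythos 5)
Your proposal is correct and follows essentially the same route as the paper: the paper computes $\partial f/\partial\varpi$ by the quotient rule and shows its sign is that of $\sum_i u_i(u_i-u_x)e^{\varpi(1-u_i)}$, i.e.\ of $g(\varpi,U)-u_x$, which is exactly the identity you obtain via the logarithmic derivative, and parts 3) and 4) are then handled identically through the monotonicity of $g(\cdot,U)$ (Lemma~\ref{lem: monotonicity}) and the values $g(0,U)=\gamma_u$, $g(\pm\infty,U)=u_{\min},u_{\max}$ (Lemma~\ref{lem: special case}). Your remarks on the degenerate cases $u_x=u_{\max},u_{\min}$ and on strict monotonicity are consistent with the paper's treatment (cf.\ Remark~\ref{rem: mono_min_max}).
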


\begin{proof}
1) and 2) The derivative of $(\varpi,x, U)$ with respect to $\varpi$ can be expressed as
\begin{flalign}\label{equ: proof monotonicity 1}
\frac{ \partial\left(  f(\varpi,x,U)\right)}{ \partial  \varpi}&= \frac{u_x e^{\varpi(1-u_x)} (1-u_x) \sum_{i=1}^N {u_i e^{\varpi (1-u_i)}}-u_x e^{\varpi(1-u_x)}  \sum_{i=1}^N {u_i e^{\varpi (1-u_i)}}(1-u_i) }{ {\left( \sum_{i=1}^N {u_i e^{\varpi (1-u_i)}} \right)}^2} \\
&=\frac{u_x e^{\varpi(1-u_x)}  \sum_{i=1}^N {u_i (u_i-u_x) e^{\varpi (1-u_i)}}}{ {\left( \sum_{i=1}^N {u_i e^{\varpi (1-u_i)}} \right)}^2} . \tag{\theequation a}\label{equ: proof monotonicity 1 a}
\end{flalign}

Since we have $u_x e^{\varpi(1-u_x)}>0$ and ${\left( \sum\nolimits_{i=1}^N {u_i e^{\varpi (1-u_i)}} \right)}^2>0$,  the sign of the derivative only depends on the term $\sum_{i=1}^N {u_i (u_i-u_x) e^{\varpi (1-u_i)}}$.

Lemma \ref{lem: monotonicity} and Lemma \ref{lem: special case} show that $g(\varpi,U)$ is monotonically decreasing with $\varpi$ and $\varpi_x$ is the unique solution to equation $u_x=g(\varpi,U)$.
    For $\varpi<\varpi_x$, therefore, we then  have $g(\varpi,U)>g(\varpi_x,U)=u_x$, which is equivalent to $\sum\nolimits_{i = 1}^N {{u_i}({u_i} - {u_x}){e^{\varpi (1 - {u_i})}}} >0$.
Thus, we have $\frac{ \partial\left(  f(\varpi,x,U)\right)}{ \partial  \varpi}>0$ and $f(\varpi,x,U)$ is increasing with $\varpi$ if $\varpi<\varpi_x$.
    Likewise, it can be readily proved that $f(\varpi,x,U)$ is decreasing with $\varpi$ if $\varpi>\varpi_x$.

3) Suppose $g(\varpi_x,U)=u_x$,  $g(\varpi_y,U)=u_y$ and $u_x<u_y$, we would have, $g(\varpi_x,U)<g(\varpi_y,U)$. 
        Since $g(\varpi)$ is decreasing with $\varpi$ (cf. Lemma \ref{lem: monotonicity}), we then have $\varpi_x>\varpi_y$, i.e., $\varpi_x$ is decreasing with $u_x$. 

4) First, we have $g(0,U)=\gamma_u$ by the definition of $g(\varpi, U)$ (cf. \eqref{df:gwv}). 
    Using the monotonicity of $g(\varpi, U)$ with respect to $\varpi$ (cf. Lemma \ref{lem: monotonicity}), we have $\varpi_x<0$ if $u_x>\gamma_u$ and $\varpi_x>0$ if $u_x<\gamma_u$.

Thus, the proof of Theorem \ref{thm: monotonicity} is completed. 
\end{proof}

In particular, according to Lemma \ref{lem: special case}, if $u_x$=$u_{\min}$, $\varpi_x$ will approach positive infinity, while $\varpi_x$ will approach negative infinity if $u_x$=$u_{\max}$.
\begin{rem}\label{rem: mono_min_max}
\begin{enumerate}
\item[1)] $f(\varpi,x,U)$ is monotonously decreasing  with $\varpi$ if $u_x=u_{\max}$;
\item[2)] $f(\varpi,x,U)$ is monotonously increasing  with $\varpi$ if $u_x=u_{\min}$.
\end{enumerate}
\end{rem}

\begin{rem}
We denote 
\begin{equation}\label{equ: beta cal}
\beta_x=-(R_{ad}-R_p-C_n-C_m)u_x-C_p-C_n-C_m+R_{ad}, 
\end{equation}
and the relationships between $f(\varpi,x, U)$ and $\beta$ in different systems are shown as follows.
\begin{itemize}
\item For advertising systems, 
\begin{enumerate}
  \item[1)] $f(\varpi,x, U)$ is monotonically decreasing with $\beta$ in $( \beta_0,\beta_{ad})$ if $u_x \ge \gamma_u$;
  \item[2)] $f(\varpi,x, U)$ is monotonically increasing with $\beta$ in $(\beta_0, \beta_x)$ and monotonically decreasing in $(\beta_x,\beta_{ad})$ if $u_x < \gamma_u$.
\end{enumerate}
\item For noncommercial systems, 
\begin{enumerate}
  \item[1)] $f(\varpi,x, U)$ is monotonically decreasing with $\beta$ in $( \beta_0,\beta_{no})$ if $u_x \le \gamma_u$;
  \item[2)] $f(\varpi,x, U)$ is monotonically increasing with $\beta$ in $(\beta_0, \beta_x)$ and monotonically decreasing in $(\beta_x,\beta_{no})$ if $u_x < \gamma_u$.
\end{enumerate}
\end{itemize}
\end{rem}




\subsection{Discussion of Parameter}
Assume that there is unique minimum $p_{\min}$ and unique maximum $p_{\max}$ in utility distribution $U$. Without loss of generality, let $u_1 < u_2 \le  ...\le u_{t} \le \gamma_u< u_{t+1} \le ...\le u_{N-1}<u_N$, and $u_{\min}=u_1$ and $u_{\max}=u_N$. $P^*=(p^*_1,p^*_2,...,p^*_N)$ is used to denote the optimal recommendation distribution. Besides, we only discuss the relationship between the optimal recommendation distribution and the parameters ($\beta$ and $\varpi$) in Cases \textcircled{2} and \textcircled{6} in this part.

In fact, we have the following proposition on $\widetilde{\varpi}_x$. 
\begin{prop}\label{pro: 1}
$\widetilde{\varpi}_x$ has the following properties.
\begin{enumerate}
  \item[1)] $\widetilde{\varpi}_x$ exists when $u_x\ne \gamma_u$, $u_x \ne u_{\max}$ and $u_x \ne u_{\min}$;
  \item[2)] $\widetilde{\varpi}_x<0$ if $u_x>\gamma_u$; \\
            $\widetilde{\varpi}_x>0$ if $u_x<\gamma_u$.
 \item[3)] $\widetilde{\varpi}_x$ is decreasing with $u_x$, i.e, $\widetilde{\varpi}_y<\widetilde{\varpi}_x$ if $u_y>u_x$;
\end{enumerate}
\end{prop}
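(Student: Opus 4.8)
I would reduce the defining equation \eqref{df:tild_w_x} to an equation in one auxiliary function that does \emph{not} depend on $x$, and then read off all three claims from the monotonicity of that function. Fix a class index $x$ and let $\varpi\neq0$. Dividing the numerator and denominator of $f(\varpi,x,U)$ by $e^{\varpi}$, the equation $u_x=f(\varpi,x,U)$ is equivalent to $\sum_{i=1}^N u_i e^{-\varpi u_i}=e^{-\varpi u_x}$, hence (taking logarithms and dividing by the nonzero number $-\varpi$) to $u_x=M(\varpi)$, where
\begin{equation}
M(\varpi)=-\frac{1}{\varpi}\ln\Big(\sum\nolimits_{i=1}^N u_i e^{-\varpi u_i}\Big).
\end{equation}
The point is that $M$ is the same function for every $x$; the index $x$ enters only through the target value $u_x$. (Note also that $\varpi=0$ solves $u_x=f(\varpi,x,U)$ for \emph{any} $x$, which is exactly why $\widetilde{\varpi}_x$ is required to be nonzero.)

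The main work is to show that $M$ is a strictly decreasing $C^1$ bijection from $\mathbb{R}$ onto the open interval $(u_{\min},u_{\max})$, with $M(0)=\gamma_u$. Put $L(\varpi)=\ln\sum_{i=1}^N u_i e^{-\varpi u_i}$. Then $L$ is smooth with $L(0)=0$ and $L'(0)=-\sum_i u_i^2=-\gamma_u$, and a Cauchy--Schwarz computation (using that the $u_i$ are non-constant, which is guaranteed by the standing assumption $u_1<u_2\le\cdots\le u_N$) gives $L''>0$, i.e. $L$ is strictly convex. Since the value at $0$ of the tangent line to $L$ at $\varpi$ is $L(\varpi)-\varpi L'(\varpi)$, strict convexity and $L(0)=0$ give $L(\varpi)-\varpi L'(\varpi)<0$ for $\varpi\neq0$, and therefore $M'(\varpi)=\bigl(L(\varpi)-\varpi L'(\varpi)\bigr)/\varpi^{2}<0$ for all $\varpi\neq0$. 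By l'Hôpital, $M$ extends continuously with $M(0)=-L'(0)=\gamma_u$, and a short Taylor expansion gives $M'(0)=-\tfrac12 L''(0)<0$; hence $M$ is strictly decreasing on all of $\mathbb{R}$. A dominant-term estimate of $\sum_i u_i e^{-\varpi u_i}$ (the smallest $u_i$ dominates as $\varpi\to+\infty$, the largest as $\varpi\to-\infty$) yields $\lim_{\varpi\to+\infty}M(\varpi)=u_{\min}$ and $\lim_{\varpi\to-\infty}M(\varpi)=u_{\max}$, so $M$ maps $\mathbb{R}$ bijectively onto $(u_{\min},u_{\max})$; in particular $u_{\min}<\gamma_u<u_{\max}$.

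The three assertions now follow at once. For 1): when $u_x\notin\{u_{\min},u_{\max}\}$ one has $u_x\in(u_{\min},u_{\max})$, so there is a unique $\varpi$ with $M(\varpi)=u_x$; since $M(0)=\gamma_u\neq u_x$, this $\varpi$ is nonzero, giving existence (and, incidentally, uniqueness) of $\widetilde{\varpi}_x$. (Consistently, $\widetilde{\varpi}_x$ fails to exist exactly in the excluded cases, since for $u_x\in\{u_{\min},u_{\max}\}$ the value $u_x$ is a non-attained endpoint of the range of $M$, and for $u_x=\gamma_u$ the only solution is $\varpi=0$.) For 2): $\widetilde{\varpi}_x=M^{-1}(u_x)$ and $M^{-1}$ is strictly decreasing with $M^{-1}(\gamma_u)=0$, so $u_x>\gamma_u$ forces $\widetilde{\varpi}_x<0$ and $u_x<\gamma_u$ forces $\widetilde{\varpi}_x>0$. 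For 3): again $M^{-1}$ is strictly decreasing, so $u_y>u_x$ gives $\widetilde{\varpi}_y=M^{-1}(u_y)<M^{-1}(u_x)=\widetilde{\varpi}_x$.

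The main obstacle is the second paragraph: pinning down the global behavior of $M$, specifically establishing $M'<0$ through the removable singularity at $\varpi=0$ and the strict convexity of $L$ (where the non-degeneracy of $U$ is needed); the limits at $\pm\infty$ are routine. As an alternative to introducing $M$, one could argue directly from Theorem \ref{thm: monotonicity}: $f(\cdot,x,U)$ is unimodal with peak at $\varpi_x$ and $f(0,x,U)=u_x$, so the Intermediate Value Theorem on the monotone branch of $f(\cdot,x,U)$ not containing $0$ produces $\widetilde{\varpi}_x$ and fixes its sign; but the $x$-independence of $M$ is what makes part 3) transparent, so I would keep the formulation above.
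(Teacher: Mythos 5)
Your proof is correct, but it takes a genuinely different route from the paper's. You collapse the $x$-dependence of \eqref{df:tild_w_x} into the single auxiliary function $M(\varpi)=-\tfrac{1}{\varpi}\ln\sum_i u_i e^{-\varpi u_i}$ and establish, via strict convexity of $L(\varpi)=\ln\sum_i u_i e^{-\varpi u_i}$ and the tangent-line inequality at $0$, that $M$ is a strictly decreasing bijection of $\mathbb{R}$ onto $(u_{\min},u_{\max})$ with $M(0)=\gamma_u$; all three claims then drop out of the monotonicity of $M^{-1}$, and you get uniqueness of $\widetilde{\varpi}_x$ and the non-existence in the excluded cases as free by-products. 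The paper instead works directly with $f(\varpi,x,U)$: it invokes the unimodality from Theorem \ref{thm: monotonicity} (increasing on $(-\infty,\varpi_x)$, decreasing on $(\varpi_x,+\infty)$) together with the endpoint values $f(0,x,U)=u_x$ and $f(\pm\infty,x,U)$ and the zero-point (intermediate value) theorem to get existence and the sign of $\widetilde{\varpi}_x$ on the branch not containing $0$, and proves part 3) by a substitution argument: from $f(\widetilde{\varpi}_y,y,U)=u_y$ it deduces $e^{\widetilde{\varpi}_y(1-u_y)}=\sum_i u_i e^{\widetilde{\varpi}_y(1-u_i)}$, infers $f(\widetilde{\varpi}_y,x,U)<u_x=f(\widetilde{\varpi}_x,x,U)$ when $u_y<u_x$, and concludes via monotonicity of $f(\cdot,x,U)$ on the relevant branch, treating the cases $u_y<u_x<\gamma_u$, $u_x>u_y>\gamma_u$, and $u_y<\gamma_u<u_x$ separately. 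The paper's argument is shorter given that Theorem \ref{thm: monotonicity} and Lemmas \ref{lem: monotonicity}--\ref{lem: special case} are already in place; yours trades that for a convexity computation but removes the case analysis over the sign of $\varpi$, makes the ordering in part 3) transparent, and pins down the exact solvability range $u_x\in(u_{\min},u_{\max})\setminus\{\gamma_u\}$ (note also the tidy link $-L'(\varpi)=g(\varpi,U)$, so your $M$ is the average of the paper's $g$ over $[0,\varpi]$, which explains why $M(0)=\gamma_u$ and why $M$ inherits monotonicity). Both arguments rely on the same standing assumptions (all $u_i>0$, unique minimum and maximum), so there is no gap.
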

\begin{proof}
Refer to the Appendix \ref{app3}.
\end{proof}

For convenience, we denote $\widetilde{\varpi}_{1} \buildrel \Delta \over =+\infty$ and $\widetilde{\varpi}_{N} \buildrel \Delta \over =-\infty$.

For advertising systems, the optimal recommendation distribution is given by \eqref{equ: res of Optimal P in E1} and $\varpi^*>0$ (cf. Theorem \ref{thm: Optimal P in E1}). As $\varpi^* \to + \infty$, we have
\begin{flalign}\label{equ:PQ_min1}
   p^*_1=f(+\infty,1,U)=&\mathop {\lim }\limits_{\varpi^*  \to  + \infty } {{{u_{\min }}{e^{\varpi^* (1 - {u_{\min }})}}} \over {\sum_{i = 1}^N {{p_i}{e^{\varpi^* (1 - {u_i})}}} }} \\
   =& \mathop {\lim }\limits_{\varpi^*  \to  + \infty } {{{u_{\min }}{e^{\varpi^* (1 - {u_{\min }})}}} \over {{u_{\min }}{e^{\varpi^* (1 - {u_{\min }})}} + \sum_{{u_i} \ne {u_{\min }}} {{u_i}{e^{\varpi^* (1 - {u_i})}}} }}  \tag{\theequation a}\\
  =&   \mathop {\lim }\limits_{\varpi^*  \to  + \infty } {1 \over {1 + \sum\nolimits_{{u_i} \ne {u_{\min }}} {{{{u_i}} \over {{u_{\min }}}}{e^{\varpi^* ({u_{\min }} - {u_i})}}} }}  \tag{\theequation b}\\
  =&   1 .\tag{\theequation c} \label{equ:PQ_min1 c}
\end{flalign}
Obviously, $p^*_{k}=f(+\infty,k,U)=0$ when $k \ge 2$. Therefore, the utility distribution $P^*$ is $(1,0,0,...,0)$ here.

 Based on Proposition \ref{pro: 1}, if $0<\widetilde{\varpi}_{N_3+1} < \varpi^* < \widetilde{\varpi}_{N_3}$ ($1\le N_3 \le N-1$), we have $p^*_x>u_x$ for $1 \le x \le N_3$ and $p^*_x<u_x$ for $N_3+1 \le x \le N$. The number of optimal recommendation probability which is larger than corresponding utility probability is $N_3$. Let $N_4=N-N_3$. $N_3$ decreases with increasing of $\varpi^*$ (cf. Proposition \ref{pro: 1}). In particular, if $\varpi^*>\widetilde{\varpi}_{2}>0$, only the recommendation probability of the event with smallest utility probability is enlarged and that of other events is reduced, compared to the corresponding utility probability. As the parameter $\varpi^*$ approaches positive infinity, the recommendation distribution will be $(1,0,0,...,0)$. In conclusion, this recommendation system is a special information-filtering system which prefers to push the small-probability events.

For noncommercial systems, we can get the optimal recommendation distribution by \eqref{equ: res of Optimal P in E2} and $\varpi^*<0$ (cf. Theorem \ref{thm: Optimal P in E2}). As $\varpi \to -\infty$, it is noted that
\begin{flalign}\label{equ:PQ_max1}
   p^*_N=f(-\infty,N,U)=&\mathop {\lim }\limits_{\varpi  \to  - \infty } {{{u_{\max }}{e^{\varpi (1 - {u_{\max }})}}} \over {\sum_{i = 1}^N {{u_i}{e^{\varpi (1 - {u_i})}}} }} \\
   =& \mathop {\lim }\limits_{\varpi  \to  - \infty } {{{u_{\max }}{e^{\varpi (1 - {u_{\max }})}}} \over {{u_{\max }}{e^{\varpi (1 - {u_{\max }})}} + \sum\limits_{{u_i} \ne {u_{\max }}} {{u_i}{e^{\varpi (1 - {u_i})}}} }} \tag{\theequation a} \\
  =&   \mathop {\lim }\limits_{\varpi  \to  - \infty } {1 \over {1 + \sum_{{u_i} \ne {u_{\max }}} {{{{u_i}} \over {{u_{\max }}}}{e^{\varpi ({u_{\max }} - {u_i})}}} }}  \tag{\theequation b}\\
  =&   1,   \tag{\theequation c} \label{equ:PQ_max1 c}
\end{flalign}
and $p^*_{k}=f(-\infty,k,U)=0$ when $k \le N-1$. Hence, $P^*=(0,0,..,0,1)$ in this case.

If $\widetilde{\varpi}_{K+1} < \varpi < \widetilde{\varpi}_{K}<0$ ($1\le K\le N-1$), $p^*_x<u_x$ for $1 \le x \le K$ and $p^*_x>u_x$ for $K+1 \le x \le N$. The number of optimal recommendation probability which is larger than corresponding utility probability is $N-K$. Let $N_2=N-K$ and $N_1=K$. It is noted that $N_2$ decreases with decreasing of $\varpi$ (cf. Proposition \ref{pro: 1}). In particular, if $\varpi<\widetilde{\varpi}_{N-1}<0$, only the recommendation probability of the event with largest utility probability is enlarged and that of other events is reduced, compared to the corresponding utility probability. As the parameter $\varpi$ approaches negative infinity, the push distribution will be $(0,0,...,0,1)$. In this case, the high-probability events are prefered to push by this recommendation system.

Let the optimal recommendation distribution be equal to the utility distribution, i.e., $P^*= U$, and we have
\begin{flalign}\label{equ: PQ_equ}
{u_j} &= {{{u_j}{e^{\varpi (1 - {u_j})}}} \over {\sum\nolimits_{i = 1}^N {{u_i}{e^{\varpi (1 - {u_i})}}} }}=f(\varpi,j,U)   \quad , \,\,\, 1\le j \le N.
\end{flalign}
It is noted that $\varpi=0$ is the solution of (\ref{equ: PQ_equ}) according to (\ref{equ: f_zero}). Since $f(\varpi,1,U)$ is monotonously increasing with $\varpi$ (cf. Remark \ref{rem: mono_min_max}), $f(\varpi,1,U) \ne f(0,1,U)=u_1$ when $\varpi \ne 0$. Thus there exists one and only one root for (\ref{equ: PQ_equ}), which is $\varpi=0$, and $P^*=U$ in this case. Here, all the data types are fairly treated.

For convenience, the relationship between parameter, the optimal recommendation distribution and the utility distribution is summarized in Table \ref{tab:result5}.
\begin{table}[tbp]
\centering
    \caption{Optimal recommendation distribution with parameters. $\downarrow$ is used to denote $p^*_x<u_x$ and $\uparrow$ is used to denote $p^*_x>u_x$.}\label{tab:result5}
\begin{tabular}{c|c|c|c|c|c|c}
\toprule [1 pt]
$\beta$ & $\varpi$ &$x=1$ & $2 \le x \le t$ &$t+1 \le x \le N-1$ &$x=N$& $P^*$\\
\hline
$\beta_{no}$ & $-\infty$ &$0$ & $p^*_x<u_x$&$p^*_x<u_x$& 1&(0,0,...,0,1)  \\
\hline
$(\beta_0,\beta_{no})$ & $(-\infty,0)$ & $p^*_1<u_1$&$p^*_x<u_x$&/& $p_N^*>u_N$& \multirow{3}{*}{$(\underbrace{\downarrow,...,\downarrow}_{N_1},\underbrace{\uparrow,...,\uparrow}_{N_2})$}\\

$(\beta_0,\tilde \beta_x)$ & $(-\infty,\widetilde{\varpi}_x)$ &/ &  /&$p^*_x<u_x$& $p_N^*>u_N$&  \\

$(\tilde \beta_x, \beta_{no})$ & $(\widetilde{\varpi}_x,0)$ & /  &/&$p^*_x>u_x$& $p_N^*>u_N$ & \\
\hline
$\beta_0$ &  0 &$p^*_1=u_1$& $p^*_x=u_x $ &$p^*_x=u_x $& $p_N^*=u_N$&$(u_1,u_2,...,u_N)$\\
\hline
$(\beta_0,\tilde \beta_x)$ & $(0,\widetilde{\varpi}_x)$& $p^*_1>u_1$&  $p^*_x>u_x$& /& /&\multirow{3}{*}{$(\underbrace{\uparrow,...,\uparrow}_{N_3},\underbrace{\downarrow,...,\downarrow}_{N_4})$} \\

$(\tilde \beta_x, \beta_{ad})$ & $(\widetilde{\varpi}_x,+\infty)$& $p^*_1>u_1$& $p^*_x<u_x$  &/& /& \\

$(\beta_0,\beta_{ad})$ &$(0,+\infty)$&$p^*_1>u_1$&  /&$p^*_x<u_x$ &$p_N^*<u_N$ & \\
\hline
$\beta_{ad}$ &$+\infty$&$1$ & $p^*_x<u_x$&$p^*_x<u_x$ & 0 &(1,0,...,0,0)\\
\toprule [1 pt]
\end{tabular}
\end{table}

In fact, the recommendation system can be regarded as an information-filtering system which push data based on the preferences of users \cite{parra2014optimal}. The input and output of this information-filtering system can be seen as utility distribution and optimal recommendation distribution, respectively. 
For advertising systems, compared to the input, on the output port, the recommendation probability of data belonging to the set $\{S_i| 1\le i \le K\}$ is amplified, and that belonging to the set $\{S_i| K+1\le i \le N\}$ is minified, where $1 \le K\le N-1$ and $0<\widetilde{\varpi}_{K+1} < \varpi < \widetilde{\varpi}_{K}$.
Since $u_1 < u_2 \le  ...\le  ...\le u_{N-1}<u_N$, the advertising system is a special information-filtering system which prefers to push the small-probability events.
Assume $\widetilde{\varpi}_{K+1} < \varpi < \widetilde{\varpi}_{K}<0$ and $1 \le K\le N-1$. 
For noncommercial systems, the data with higher utility probability, i.e., $\{S_i| K+1\le i \le N\}$, is more likely to be pushed, and the data with smaller utility probability, i.e., $\{S_i| 1\le i \le K\}$, tends to be overlooked.
Since $u_1 < u_2 \le  ...\le  ...\le u_{N-1}<u_N$, the high-probability events are prefered to push by the advertising system.

\begin{rem}\label{rem: varpi event}
The recommendation system can be regarded as an information-filtering system, and parameter $\varpi$ is like a indicator which can reflect the system performance.
 If $\varpi>0$, the system is a advertising system, which prefers to recommend the small-probability events, while the system is a noncommercial system, which is more likely to recommend the high-probability events if $\varpi<0$. In particular, the system pushes data according to the preference of the user exactly if $\varpi=0$.
\end{rem}


\subsection{Geometric Interpretation}
In this part, we will give a geometric interpretation of optimal recommendation distribution by means of probability simplex. Let $\mathcal{P}$ denote all the probability distributions from an alphabet $\{S_1,S_2,...,S_N\}$, and point $U$ denote the utility distribution. Besides, $P$ is notation to denote recommendation distribution and $P^*$ denotes the optimal recommendation distribution. With the help of the method of types \cite{Elements}, we have Figure \ref{fig:sketch} to characterize the generation of optimal recommendation distribution.
\begin{figure}
  \centerline{\includegraphics[width=8.0cm]{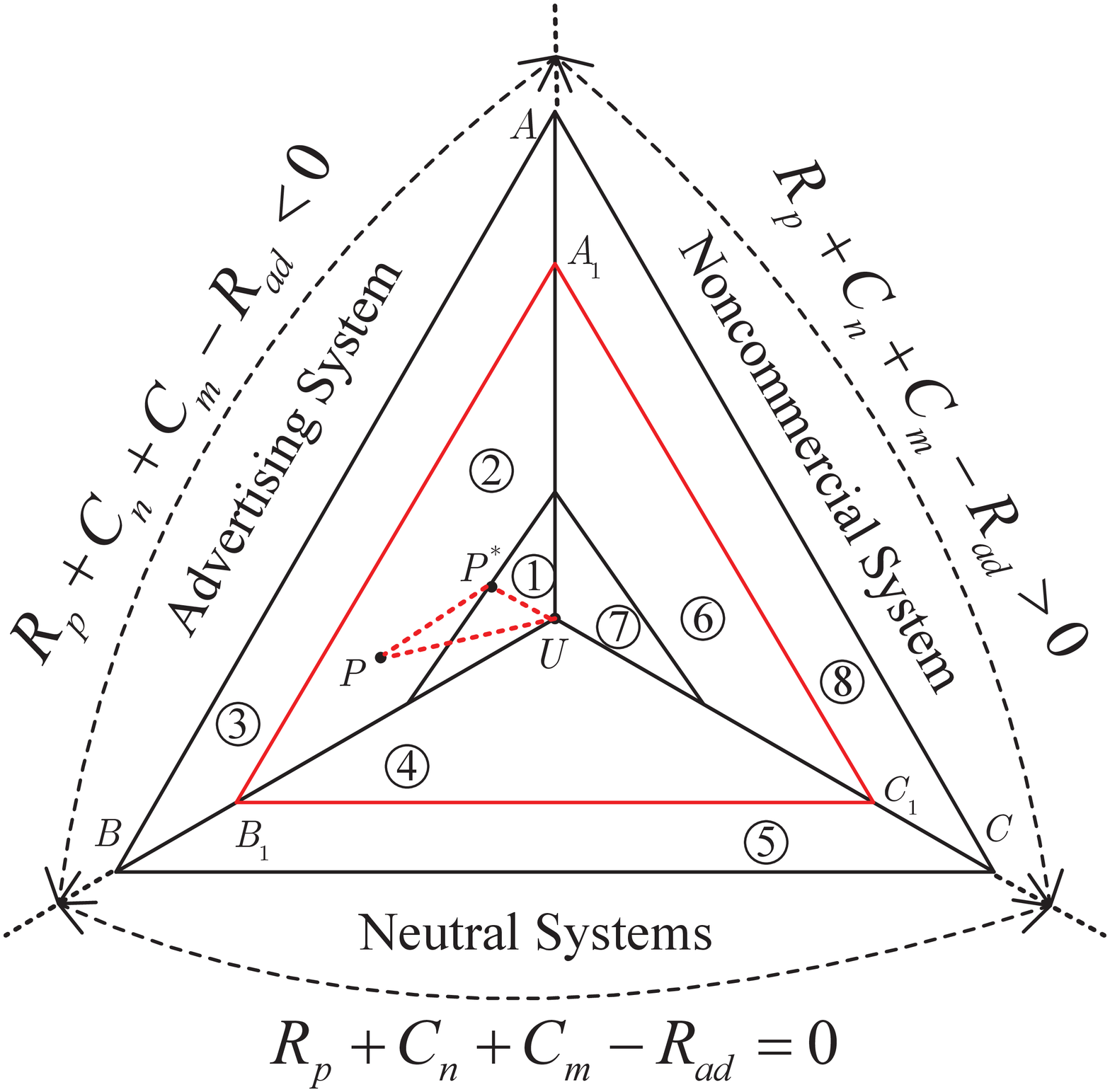}}
  \caption{Probability simplex and optimal recommendation. Region \textcircled{i} denotes Case \textcircled{i} in Table \ref{tab:result2} for $1\le i \le 8$.}\label{fig:sketch}
\end{figure}

In Figure \ref{fig:sketch}, all cases can be grouped into three major categories, i.e., $R_p+C_n+C_m-R_{ad}<0$, $R_p+C_n+C_m-R_{ad}=0$ and $R_p+C_n+C_m-R_{ad}>0$. In fact, these three categories are advertising systems, neutral systems and noncommercial systems, which are denoted by triangles $\Delta_{ABU}$, $\Delta_{BCU}$ and $\Delta_{ACU}$, respectively.
Triangles $\Delta_{A_1 B_1 C_1}$ denotes the region where the average revenue is equal or larger than a given value.
In this paper, our goal is to find the optimal recommendation distribution so that the recommended data would fit the preference of the user as much as possible when the average revenue satisfies the predefined requirement. Since the matching degree between the recommended data and the user's performance is characterized by $2^{-nD(P \parallel U)}$ (cf. Section \ref{sec:data model}), $P^*$ should be a recommendation distribution on the border of or in the triangles $\Delta_{A_1 B_1 C_1}$, which is closest to utility distribution $U$ in relative entropy.
Therefore, there is no solution if recommendation distribution $P$ falls within region \textcircled{3}, \textcircled{5} and \textcircled{8}. 

For advertising systems, $\mathcal{P}$ can be divided into region \textcircled{1}\textcircled{2}\textcircled{3}. Obviously, $P^*=U$ for region $\textcircled{1}$ since $U$ falls within region \textcircled{1}. $P^*$ in \textcircled{2} is the recommendation distribution closest to $U$ in relative entropy. It is noted that $2^{-nD(P^* \parallel U)}>2^{-nD(P \parallel U)}$ if $P \ne P^*$ and $P \in \textcircled{2}$. There is no solution in region \textcircled{3} since $P  \notin \Delta_{A_1 B_1 C_1} $. There is a similar situation for noncommercial systems, which is composed of region \textcircled{6}, \textcircled{7} and \textcircled{8}. There are only two regions for neutral systems.In this case, there is no solution when $P \in \textcircled{5}$, and $P^*=U$ when $P \in \textcircled{4}$.

Furthermore, Triangles $\Delta_{A B U}$ characterizes the set $E_1$ and triangles $\Delta_{A C U}$ characterizes the set $E_2$.

\section{Relationship between optimal recommendation Strategy and MIM} \label{sec:relationship}
\subsection{Normalized Recommendation Value}
In this section, we shall focus on the relationship between optimal recommendation distribution and MIM in Cases \textcircled{2} and \textcircled{6}. In fact, the optimal recommendation distribution in other cases is invariable and it makes little sense to discuss the relationship between them.

Within a period of time, the parameters $C_p,R_p,C_n,R_{ad},C_m$ can be seen as invariants of the recommendation system for a given system, and they do not change with the users. The strategy recommendation is determined by the personalized user concern and the expected revenue in this paper. The former is characterized by the utility distirbution, and the latter is denoted by $\beta$. Based on the discussion in Section \ref{sec: short summary}, there is one-to-one mapping between the parameter $\varpi$ and the minimum average revenue $\beta$. Thus, once the recommendation environment is determined, we only need to select the proper parameter $\varpi^*$ to get the optimal recommendation distribution based on the utility distribution, and here $\varpi^*$ is chosen to satisfy the expectation of the average revenue.

The form of optimal recommendation distribution in (\ref{equ: Pc_definition1}) suggests that we allocate recommendation proportion by weight factor $u_i e^{\varpi^*(1-u_i)}$ where $U=\{u_1,u_2,...,u_N\}$ is the utility distribution. In a sense, we take the weight factor $u_i e^{\varpi^*(1-u_i)}$ as the recommendation value of data belonging to the $i$-th class, and we generate optimal recommendation distribution based on the recommendation value. It is noted that the optimal recommendation distribution is the normalized recommendation value. Furthermore, the total recommendation value of this user is $\sum\nolimits_{i=1}^N {u_i e^{\varpi (1-u_i)}}$.

In fact, the recommendation values are the subjective view, and they are not objective quantity in nature. They show the relative level of recommendation tendency. If all the recommendation values multiply by the same constant, the suggested recommendation distribution will not change.

\subsection{Optimal Recommendation and MIM}
Generally speaking, there must be a reason of the fact thar the recommendation strategy prefers specific data. According to the discussion in Remark \ref{rem: varpi event}, advertising systems prefer the small-probability events (i.e., advertisement), The system push as many small-probability events as possible with the increasing of the minimum average revenuen since pushing small-probability events is the main source of income for this system. Noncommercial systems prefer the high-probability events, and they push as many high-probability events as possible with the increasing of the minimum average revenue since recommending a piece of desired data is the main source of income for this system.
The preference of the recommendation system means that the system has its own evaluation of the degree of importance for different data. The system prefers to recommend the important data in order to achieve the recommendation target in our model.
 Thus, the recommendation distribution is determined by the importance of data. That is, the recommendation probability of data belonging to the $i$-th class is the proportion of recommendation value $u_i e^{\varpi^*(1-u_i)}$ in total recommendation value. 
 In this sense, the recommendation value gives intuitive description of message importance. Furthermore, their relative magnitudes are more significant than their absolute magnitude. For a given parameter, the recommendation value is a quantity with respect to probability distribution. Hence, the recommendation value can be seen as a measure of message importance from this viewpoint, which agrees with the main general principles for definition of importance in previous literatures  \cite{Ivanchev2016information,Kawanaka2017information,monks2013machine,li2015leveraging,masnick1967on}. Moreover, the total recommendation value characterizes the total message importance.


In fact, the form of this importance measure is extremely similar to that of MIM \cite{fan2016message}. MIM is proposed to measure the message importance in the case where small-probability events contains most valuable information and the parameter $\varpi$ in MIM is called \textit{importance coefficient}. The importance coefficient is always larger than zero in MIM. Furthermore, the MIM highlights the importance of minority subsets and ignores the high-probability event by taking them for grant. As stated in Remark \ref{rem: varpi event}, the small-probability events are highlighted when $\varpi>0$. Therefore, MIM is consistent with the conclusion of this paper.

Besides, as the parameter increased to sufficiently large, the message importance of the event with minimum probability dominates the MIM according to \cite{fan2016message,liu2017non}. It is the same with recommendation value since $\mathop {\lim }\limits_{\varpi  \to  + \infty } {{{u_{\min }}{e^{\varpi (1 - {u_{\min }})}}} \over {\sum_{i = 1}^N {{p_i}{e^{\varpi (1 - {u_i})}}} }}=1 $ (cf. \eqref{equ:PQ_min1}). Furthermore, if $\varpi$ is not a very large positive number, the form of $f(\varpi,x,P)$ ($1<x<N$) is similar with Shannon entropy, which is discussed in \cite{she2019information}. \cite{she2017focusing} discussed the selection of importance coefficient, and it pointed out that the event with probability $u_j$ becomes the principal component in MIM when $\varpi=1/u_j$. Due to the same form, the optimal recommendation distribution also has this conclusion. That is, $f(1/u_j,j,U)$ is the larger than $f(1/u_j,i,U)$ if $i \ne j$, which is shown in Fig. \ref{fig:paper_figure_fw}.

\begin{rem}\label{MIM meaning}
The optimal recommendation probability is normalized message importance by means of MIM. In other word, the normalized message importance can also be seen as a distribution that is closest to utility distribution in relative entropy, when the average revenue of this recommendation system is equal or larger than a predefined value.
\end{rem}
Although the MIM is proposed based on information theory, it can also be given from the viewpoint of data recommendation. In fact, Remark \ref{MIM meaning} characterizes the physical meaning of MIM from data recommendation perspective, which confirms the rationality of MIM in one aspect.

\begin{rem}
We also expands MIM to general cases whatever the probability of users interested event is. The importance coefficient $\varpi$ plays a switch role on the event sets of users attention, that is
\begin{enumerate}
  \item[1)] If $\varpi>0$, the importance index of small-probability events is magnified while that of high-probability events is lessened. In this case, the system prefers to recommend the small-probability events;
  \item[2)] If $\varpi<0$, the importance index of high-probability events is magnified while that of small-probability events is lessened. In this case, the system prefers to recommend the high-probability events;
   \item[3)] If $\varpi=0$, The importance of all events is the same.
\end{enumerate}
\end{rem}

Besides, the value of this parameter $\varpi$ can also give a clear definition of the set which users are interested in. For example, for a given utility distribution $U$ ($u_1<u_2\le...\le u_{N-1}<u_N$), if $0<\widetilde{\varpi}_{K+1} < \varpi < \widetilde{\varpi}_{K}$ ($1 \le K\le N-1$), the sparse events are focused on, and the set of these events is $\{S_i| 1\le i \le K\}$ here. In fact, $\{S_i| 1\le i \le K\}$ give a unambiguous description of the definition of small-probability events in MIM. On the contrary, $\{S_i| K\le i \le N\}$ is the set of the events with high-probability which is highlighted if $\widetilde{\varpi}_{K+1} < \varpi < \widetilde{\varpi}_{K}<0$ ($1 \le K\le N-1$). In particular, if $\varpi> \widetilde{\varpi}_{2}>0$, only the events with minimum probability will be focused on.

\section{Numerical Results} \label{sec:numerical}
In this section, some numerical results are presented to to validate the theoretical founds in this paper.


\subsection{Property of $g(\varpi,V)$}
Fig. \ref{fig:paper_figure_g_w} depicts the function $g(\varpi,V)$ versus parameter $\varpi$. The probability distributions are $V_1=\{0.1,0.2,0.3,0.4\}$, $V_2=\{0.05,0.15,0.3,0.5\}$, $V_3=\{0.13,0.17,0.34,0.36\}$,$V_4=\{0.01,0.11,0.12,0.76\}$ and $V_5=\{0.22,0.25,0.25,0.28\}$. The scaling factor of $\varpi$ is varying from $-100$ to $100$.

Some observations are obtained in Fig. \ref{fig:paper_figure_g_w}. The functions in all the cases are monotonically decreasing with $\varpi$. When $\varpi$ is small enough, i.e., $\varpi=-100$, $g(\varpi,V)$ is close to $\max (V_i)$ for $1\le i \le 5$, While $g(\varpi,V)$ approaches $\min (V_i)$ for $1\le i \le 5$ when $\varpi=100$. In fact, we obatin $\min (V_i)\le g(\varpi,V) \le \max(V_i)$. These results validate Lemma \ref{lem: monotonicity} and Lemma \ref{lem: special case}.
Furthermore, the velocity of change of $g(\varpi,V)$ in region $-20<\varpi<20$ is bigger than that in regions $\varpi<-20$ and $\varpi>20$. The KL distance between these probability distributions and uniform distribution are $0.1536,0.3523,0.1230,0.9153,0.0052$, respectively. Thus, the amplitude of variation of $g(\varpi,V)$ decreases with decreasing of KL distance with uniform distribution.

\begin{figure}
  \centerline{\includegraphics[width=8.0cm]{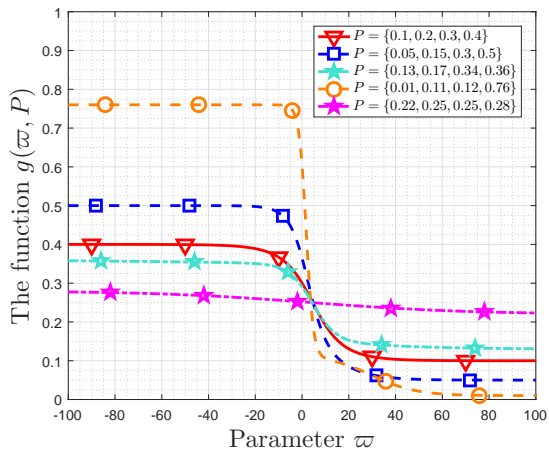}}
  \caption{$g(\varpi,P)$ vs $\varpi$.}\label{fig:paper_figure_g_w}
\end{figure}

\subsection{Optimal Recommendation Distribution}
Then, we focus on proposed optimal recommendation distribution in this paper. The parameters of recommendation system are D1=$\{C_p=4.5,R_p=2,C_n=2,R_{ad}=11,C_m=2\}$ and D2=$\{C_p=1,R_p=9,C_n=2,R_{ad}=3,C_m=2\}$. The minimum average revenue $\beta$ is varying from $0$ to $3$. The utility distributions are U1=$\{0.1,0.2,0.3,0.4\}$ and U2$=\{0.05,0.15,0.3,0.5\}$. Let $U_0=\{0.25,0.25,0.25,0.25\}$. Some results are listed in Table \ref{tab:result6}.
\begin{table*}[htbp]
\centering
    \caption{The auxiliary variables in optimal recommendation distribution.}\label{tab:result6}
\begin{tabular}{c|c|c|c|c|c}
\bottomrule
Case &$R_p+C_n+C_m-R_{ad} $ &$\beta_0$  &  $\beta_{no}$  &$\beta_{ad}$ &$\gamma_u$\\
\hline
D1,U1 &  -5& 1  &  / & 2  &0.3\\
\hline
D2,U1 &  10 &1  &  2 & /&0.3 \\
\hline
D1,U2 &  -5& 0.675  &   /& 2.25&0.365\\
\hline
D2,U2 & 10 & 1.65  &  3 & /&0.365\\
\toprule
\end{tabular}
\end{table*}

 Fig. \ref{fig:paper_figure_Op_dis_1} shows the relationship between optimal recommendation distribution and the minimum average revenue. Some observations can be obtained. 
In fact, the optimal recommendation distribution can be divided into three phases. In phase one, in which the minimum average revenue is small ($\beta<\beta_0$), the optimal recommendation distribution is exactly the same as utility distribution. In phase two, the minimum average revenue is neither too small nor too large ($\beta_0<\beta<\beta_{no}$ or $\beta_0<\beta<\beta_{ad}$). In this case, the optimal recommendation distribution changes with the increasing of minimum average revenue.
In the phase three, in which the minimum average revenue is too large ($\beta>\beta_{no}$ or $\beta>\beta_{ad}$), there is no appropriate recommendation distribution. 

The optimal recommendation probability $p^*_1$ versus the average revenue is depicted in subgraph one of Fig. \ref{fig:paper_figure_Op_dis_1}.
If D1is adopted, which means advertising systems, we obtain $p^*_1$ increases with the increasing of minimum average revenue when $\beta_0<\beta<\beta_{ad}$. $p^*_1$ is larger than $u_1$ in this case. $p^*_1$ approaches one when the average revenue is close to $\beta_{ad}$.
If D2 is adopted, $p^*_1$ decreases with the increasing of minimum average revenue in the region $\beta_0<\beta<\beta_{no}$, and $p^*_1$ is smaller than $u_1$. $p^*_1$ approaches zero as the average revenue increasing to $\beta_{no}$. In addition, $p^*_1=u_1$ when $\beta<\beta_0$.

Subgraph two of Fig. \ref{fig:paper_figure_Op_dis_1} depicts the optimal recommendation probability $p^*_2$ versus the minimum average revenue. If $\beta<\beta_0$, the optimal recommendation probability is equal to the corresponding utility probability. If $\beta>\beta_{no}$ or $\beta>\beta_{ad}$, $p^*_2$ in the four cases will decrease from $u_2$ to zero. In this case, $p^*_2$ increases at the early stage and then decreases for D1,U1 and D1,U2, while $p^*_2$ is monotonously increasing for D2,U1 and D2,U2. Since $u_2<\gamma_u$, $R_p+C_n+C_m-R_{ad}<0$ for D1 and $R_p+C_n+C_m-R_{ad}>0$ for D2, these numerical results agree with the discussion in Section \ref{sec:monotonicity}. Subgraph three is similar to the Subgraph two.

Subgraph four is contrary to the Subgraph one, which shows the optimal recommendation probability $p^*_4$ versus the minimum average revenue. If $\beta<\beta_0$, $p^*_4$ is equal to $u_4$. For D1, i.e., advertising systems, If the minimum average revenue is larger than $\beta_0$ and smaller than $\beta_no$, $p^*_4$ is smaller than $u_4$ and it decreases to zero as $\beta \to \beta_{ad}$. However, $p^*_4$ is larger than $u_4$ and it increases to one as $\beta \to \beta_{no}$ for D2, i.e., noncommercial systems.

\begin{figure}
\begin{minipage}{0.48\linewidth}
  \centerline{\includegraphics[width=8.0cm]{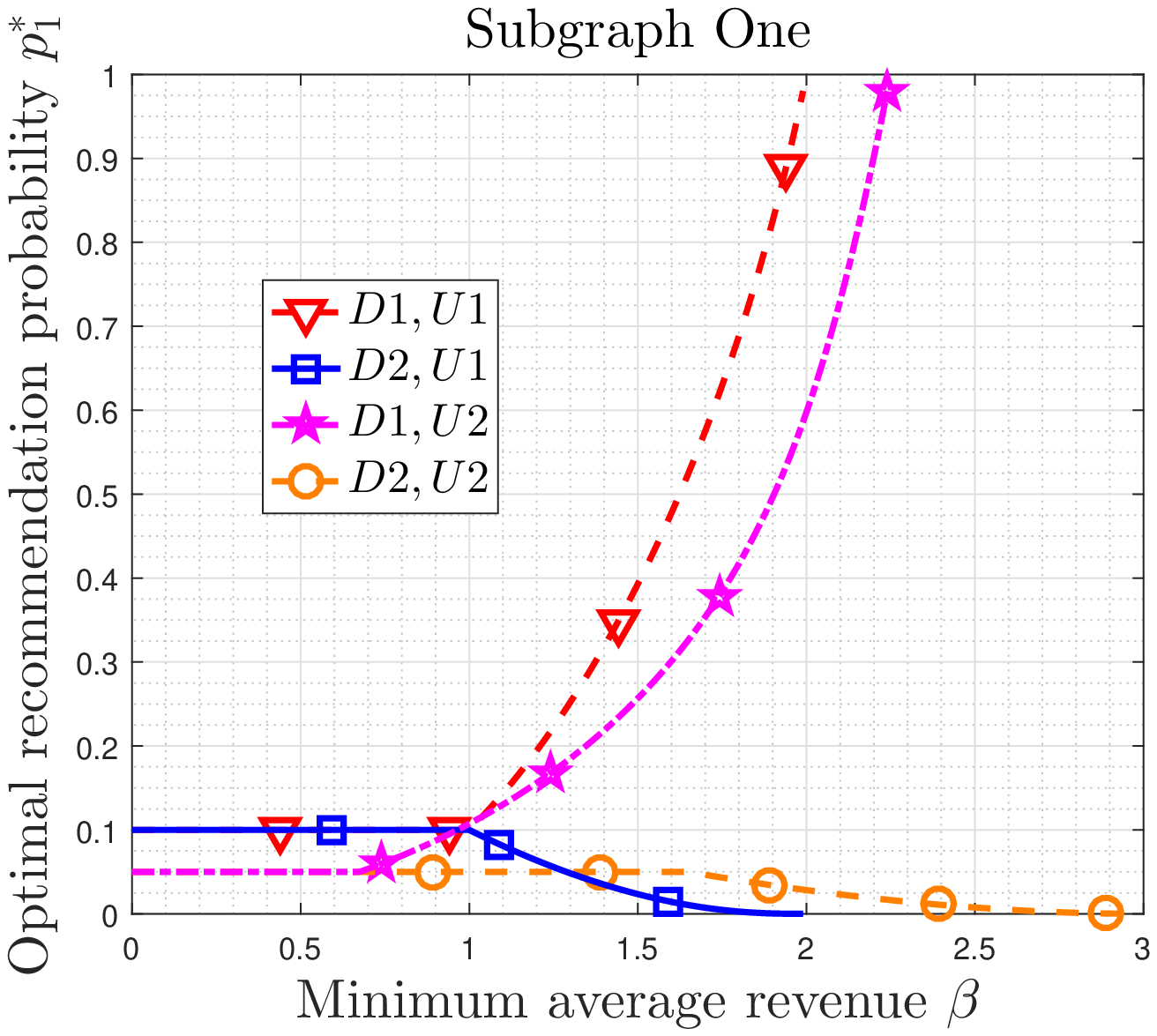}}
\end{minipage}
\hfill
\begin{minipage}{0.48\linewidth}
  \centerline{\includegraphics[width=8.0cm]{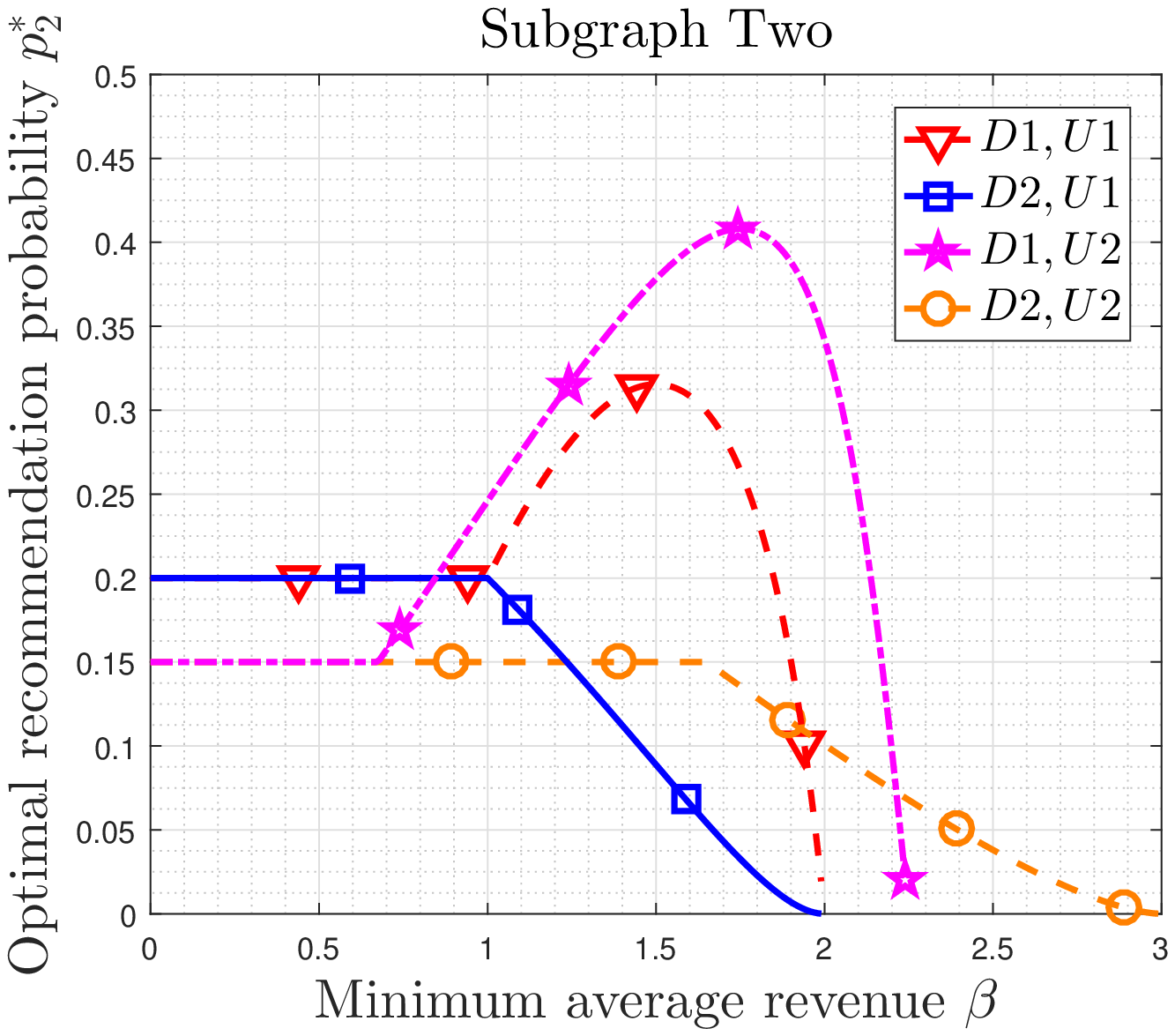}}
\end{minipage}
\begin{minipage}{0.48\linewidth}
  \centerline{\includegraphics[width=8.0cm]{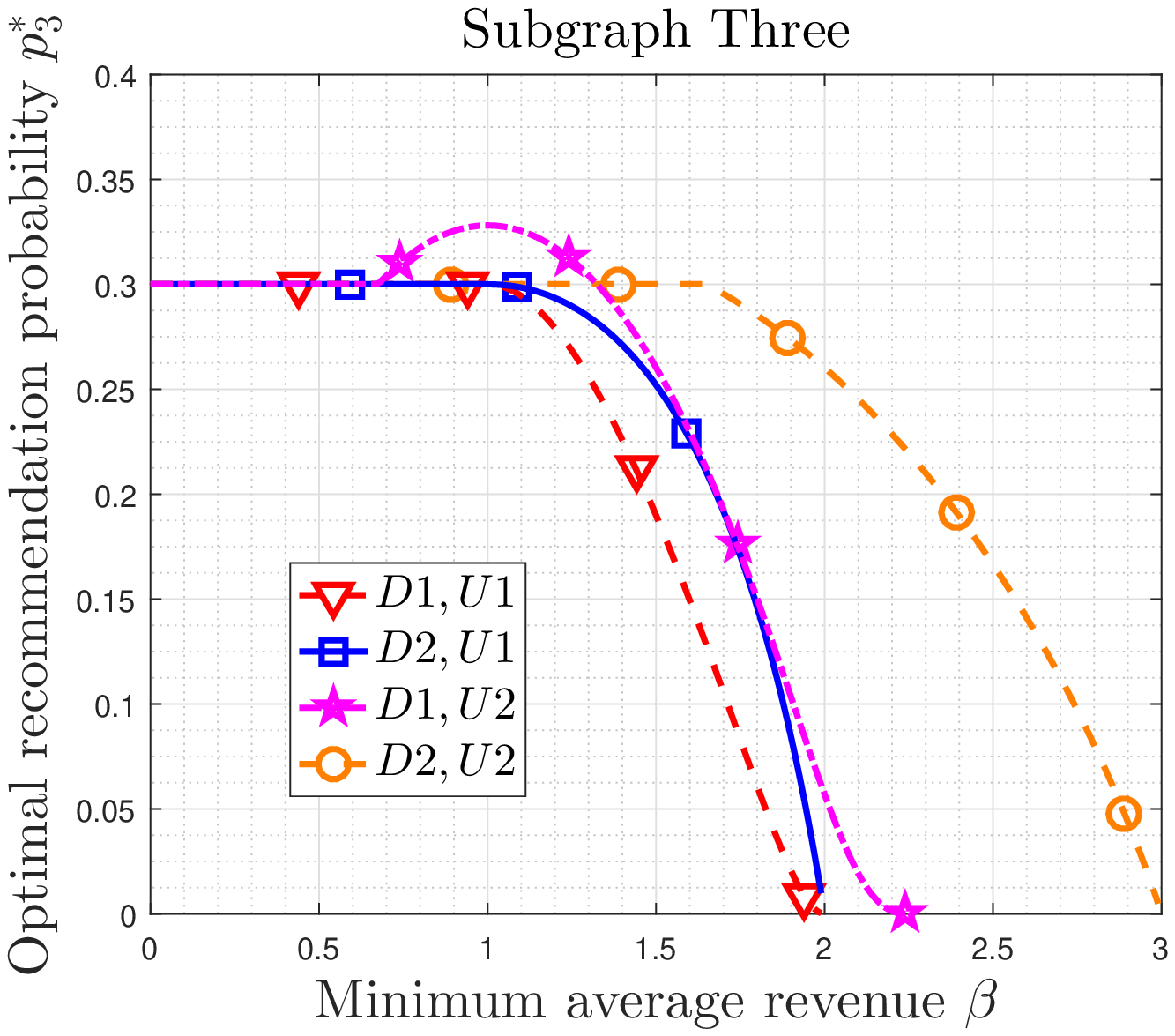}}
\end{minipage}
\hfill
\begin{minipage}{0.48\linewidth}
  \centerline{\includegraphics[width=8.0cm]{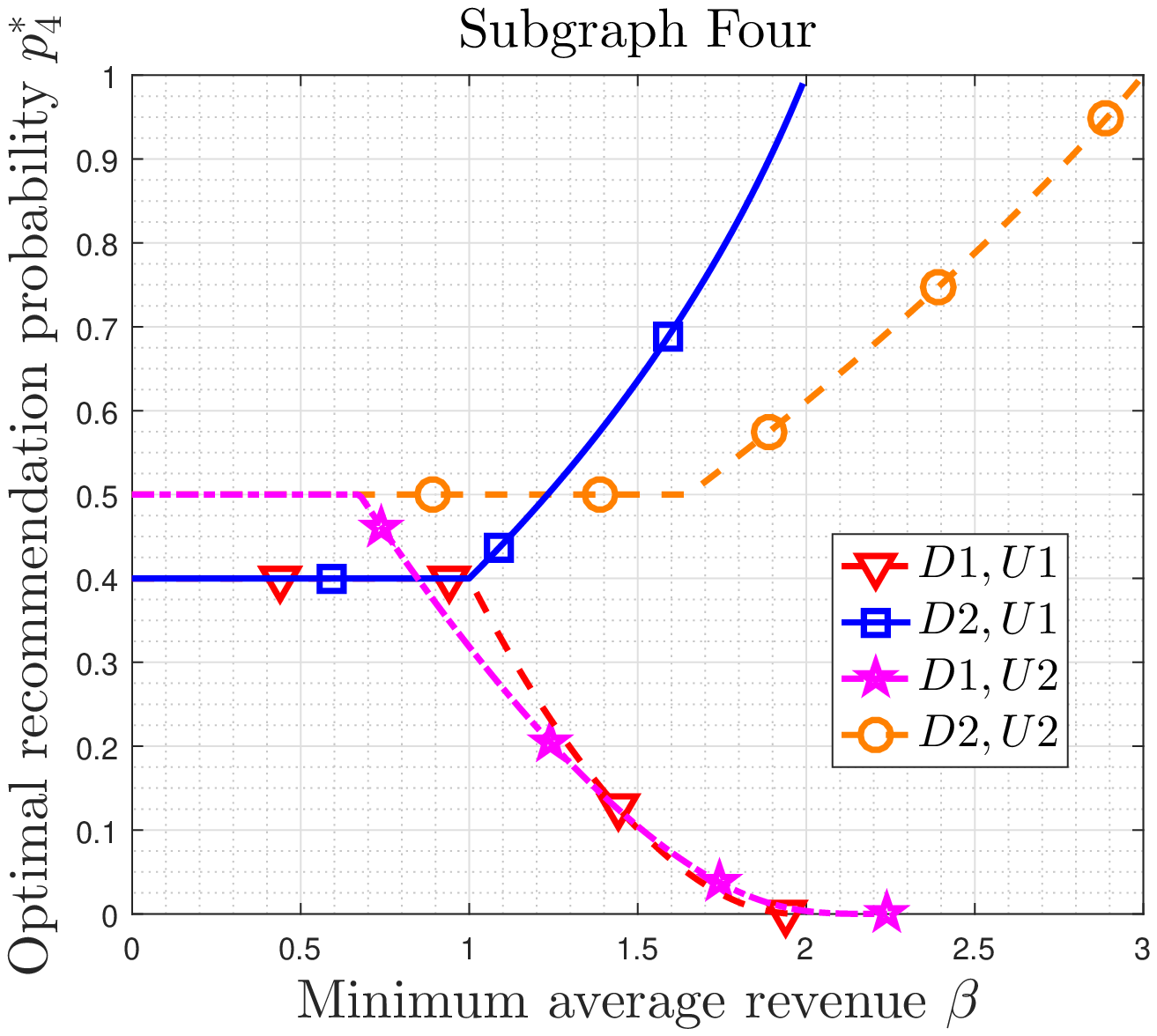}}
\end{minipage}
 \caption{The optimal recommendation distribution vs minimum average revenue. The parameters set $\{C_p,R_p,C_n,R_{ad},C_m\}$ is denoted by D1 and D2, where D1 $=\{4.5,2,2,11,2\}$ and D2 $=\{1,9,2,3,2\}$. The utility distributions are U1$=\{0.1,0.2,0.3,0.4\}$ and U2$=\{0.05,0.15,0.3,0.5\}$.}\label{fig:paper_figure_Op_dis_1}
\end{figure}

Fig. \ref{fig:paper_figure_beta_KL} illustrates minimum KL distance between recommendation distribution and utility distribution versus the minimum average revenue when $\beta_0<\beta<\beta_{no}$ or $\beta_0<\beta<\beta_{ad}$. The constraints on the minimum average revenue is true. The minimum KL distance increases with the increasing of the minimum average revenue. This figure also shows that the minimum average revenue can be gotten for a given minimum KL distance, when the utility distribution and recommendation system parameters are invariant. Since this KL distance represents the accuracy of recommendation strategy, there is a tradeoff between the recommendation accuracy and the minimum average revenue.
user behavior
\begin{figure}
  \centerline{\includegraphics[width=8.0cm]{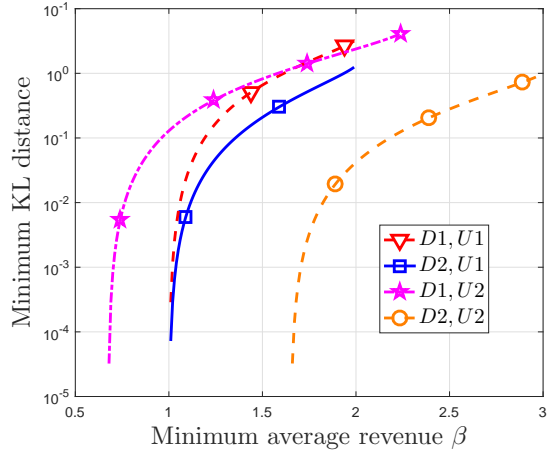}}
  \caption{Minimum KL distance between recommendation distribution and utility distribution vs minimum average revenue. The parameters set $\{C_p,R_p,C_n,R_{ad},C_m\}$ is denoted by D1 and D2, where D1 $=\{4.5,2,2,11,2\}$ and D1 $=\{1,9,2,3,2\}$. The utility distribution is U1$=\{0.1,0.2,0.3,0.4\}$ and U2$=\{0.05,0.15,0.3,0.5\}$.}\label{fig:paper_figure_beta_KL}
\end{figure}
\subsection{Property of $f(\varpi,x,U)$}
Fig. \ref{fig:paper_figure_fw} shows that the importance coefficient $\varpi$ versus the function $f(\varpi,x,U)$. The utility distribution is $\{0.03,0.07,0.12,0.240.25,0.29\}$ and the importance coefficient is varying from $-40$ to $40$. It is easy to check that $u_1<u_2<u_3<\sum\nolimits_{i=1}^6 {u_i^2}<u_4<u_5<u_6$. 
\begin{figure}
  \centerline{\includegraphics[width=16.0cm]{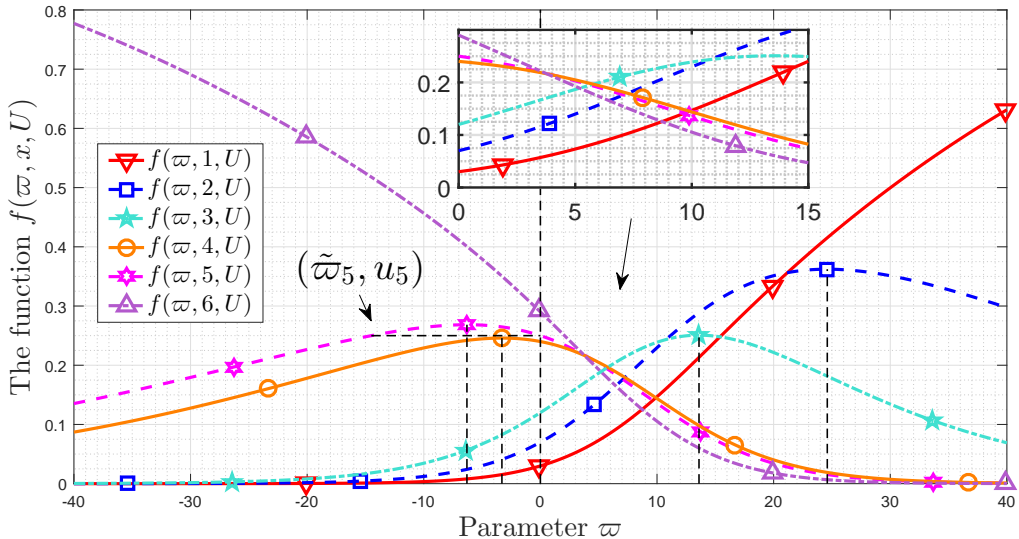}}
  \caption{Function $f(\varpi,x,U)$ vs importance coefficient $\varpi$, when the utility distribution is $U=\{0.03,0.07,0.12,0.24,0.25,0.29\}$.}\label{fig:paper_figure_fw}
\end{figure}

$f(\varpi,1,U)$ is monotonically increasing with increasing of the importance coefficient. $f(\varpi,1,U)$ is close to zero when $\varpi<-30$. 
$f(\varpi,i,U)$ ($i=2,3,4,5$) increases with the increasing of $\varpi$ when $\varpi< \varpi_i$, and then it decreases with the increasing of $\varpi$ when $\varpi> \varpi_i$. They achieve the maximum value when $\varpi=\varpi_i$.
It is also noted that $\varpi_5<\varpi_4<0<\varpi_3<\varpi_2$. 
If $\varpi<-30$, $f(\varpi,i,U)$ ($i=2,3$) is very close to zero, and $f(\varpi,i,U)>0.05$ ($i=2,3$) if $\varpi>30$, which means it changes faster when $\varpi<0$. On the contrary, $f(\varpi,i,U)$ ($i=4,5$) changes faster with $\varpi$ in $(0,+\infty)$ than that in $(-\infty,0)$ since that it is still bigger than $0.5$ when $\varpi=-40$ and it approaches zero when $\varpi>30$. In addition, $f(\varpi,6,U)$ decreases monotonically with the increasing of the importance coefficient, and $f(\varpi,6,U)$ is close to zero when $\varpi>30$.

Some other observations are also obtained. When $\varpi=0$, we obtain $f(\varpi,i,U)=u_i$ ($i=1,2,3,4,5,6$). Without loss of generality, we take $f(\varpi,5,U)$ as an example. There is a non-zero importance coefficient $\tilde \varpi_5$ which makes $f(\varpi,5,U)=u_5$. If $0>\varpi>\widetilde{\varpi}_5$, we obtain $f(\varpi,i,U)>u_i$ for $i=5,6$ and $f(\varpi,i,U)<u_i$ for $i=1,2,3,4$. Compared with the utility distribution, the result of the element in set $\{S_5,S_6\}$ is enlarged, and that in set $\{S_1,S_2,S_3,S_4\}$ is minified. The difference between these two sets is that the utility probability of $S_5$ or $S_6$ is larger than that in $\{S_1,S_2,S_3,S_4\}$.
Besides. it is also noted that $f(\varpi,i,U)>u_i$ for $i=6$ and $f(\varpi,i,U)<u_i$ for $i=1,2,3,4,5$, when $\varpi<\widetilde{\varpi}_5$. Here, only the function output of the element in high-probability set $\{S_6\}$ is larger than the corresponding utility probability.

Furthermore, when $\varpi=1/p_i \in \{33.3333,14.2857,8.3333,4.1667,4,3.4483\}$ for $1\le i\le6$, $f(\varpi,i,U)>f(\varpi,j,U)$ for $j \ne i$.

\section{Conclusion}\label{sec:conclusion}
In this paper, we discussed the optimal data recommendation problem when the recommendation model pursues best user performance with a certain revenue guarantee. Firstly, we illuminated the system model and formulated this problem as optimizations. Then we gave the explicit solution of this problem in different cases, such as advertising systems and noncommercial systems, which can improve the design of data recommendation strategy. In fact, the optimal recommendation distribution is the one that is the closest to the utility distribution in the relative entropy when it satisfies expected revenue. There is a tradeoff between the recommendation accuracy and the expected revenue. In addition, the properties of this optimal recommendation distribution, such as monotonicity and geometric interpretation, were also discussed in this paper. Furthermore, the optimal recommendation system can be regarded as an information-filtering system, and the importance coefficient determines what events the system prefers to recommend. 

We also obtained that the optimal recommendation probability is the proportion of corresponding recommendation value in total recommendation value if the minimum average revenue is neither too small nor too large. In fact, the recommendation value is a special weight factor in determining the optimal recommendation distribution, and it can be regarded as a measure of importance. Since its form and properties are the same as those of MIM, the optimal recommendation probability is exactly the normalized MIM, where MIM is used to characterize the concern of system. 
The parameter in MIM, i.e., importance coefficient, plays a switch role on the event sets of systems attention.
That is, the importance index of high-probability events is enlarged for negative importance coefficient (i.e., noncommercial system), while the importance index of small-probability events is magnified by systems for the positive importance coefficient (i.e., advertising systems).
In particular, only the maximum-probability event or the minimum-probability event are focused on as the importance coefficient approaches to positive infinity or negative infinity, respectively. 
These results give an new physical explanation of MIM from data recommendation perspective, which can validate the rationality of MIM in one aspect.
MIM is extended to the general case whatever the probability of systems interested events is. One can adjust the importance coefficient to focus on desired data type. Compared with previous MIM, the set of desired events can be defined precisely. These results can help us formulate appropriate data recommendation strategy in different scenarios.

In the future ,we may consider its applications in next generation cellular systems \cite{xiong2017group,xiong2014energy}, wireless sensor networks \cite{wang2010delay} and very high speed railway communication systems \cite{zhang2015optimal} by taking the singal transmission mode into accounts.

\appendix
\section{}
\subsection{Proof of Lemma \ref{lem: monotonicity}}\label{app1}
The derivation of $g(\varpi,V)$ is given by
\begin{flalign}\label{equ: proof of lemma 1}
\frac{ \partial  g(\varpi,V)}{ \partial  \varpi}&=\frac{\sum_{i=1}^N v_i^2(1-v_i) e^{\varpi (1-v_i)}\sum_{j=1}^N v_j e^{\varpi (1-v_j)} - \sum_{i=1}^N v_i^2 e^{\varpi (1-v_i)}\sum_{j=1}^N v_j (1-v_j)e^{\varpi (1-v_j)}  }{ {\left( \sum_{i=1}^N v_i e^{\varpi (1-v_i)}  \right)}^2 }  \\
&= \frac{ \sum\nolimits_{i=1}^{N} {\sum\nolimits_{j=1}^{N} {v_i^2 (1-v_i) v_j e^{\varpi(2-v_i-v_j)} }} - \sum\nolimits_{i=1}^{N} {\sum\nolimits_{j=1}^{N} {v_i^2 v_j (1-v_j) e^{\varpi(2-v_i-v_j)} }} }{ {\left( \sum_{i=1}^N v_i e^{\varpi (1-v_i)}  \right)}^2 } \tag{\theequation a} \label{equ: proof of lemma 1 a}\\
&= \frac{ \sum\nolimits_{i=1}^{N} {\sum\nolimits_{j=1}^{N} {v_i^2 v_j (v_j-v_i) e^{\varpi(2-v_i-v_j)} }}  }{ {\left( \sum_{i=1}^N v_i e^{\varpi (1-v_i)}  \right)}^2 } \tag{\theequation b} \label{equ: proof of lemma 1 b}\\
&= \frac{ \sum\limits_{v_i<v_j} {v_i^2 v_j (v_j-v_i) e^{\varpi(2-v_i-v_j)} }  + \sum\limits_{v_i>v_j} {v_i^2 v_j (v_j-v_i) e^{\varpi(2-v_i-v_j)} } }{ {\left( \sum_{i=1}^N v_i e^{\varpi (1-v_i)}  \right)}^2 } \tag{\theequation c} \label{equ: proof of lemma 1 c}\\
&= \frac{ \sum\limits_{v_j<v_i} {v_j^2 v_i (v_i-v_j) e^{\varpi(2-v_j-v_i)} }  + \sum\limits_{v_i> v_j} {v_i^2 v_j (v_j-v_i) e^{\varpi(2-v_i-v_j)} } }{ {\left( \sum_{i=1}^N v_i e^{\varpi (1-v_i)}  \right)}^2 } \tag{\theequation d} \label{equ: proof of lemma 1 d}\\
&= \frac{ \sum\limits_{v_i>v_j} {(v_j-v_i) v_i v_j (v_i-v_j) e^{\varpi(2-v_i-v_j)}   } }{ {\left( \sum_{i=1}^N v_i e^{\varpi (1-v_i)}  \right)}^2 } \tag{\theequation e} \label{equ: proof of lemma 1 e}\\
&= \frac{ \sum\limits_{v_i>v_j} { - {(v_j-v_i)}^2 v_i v_j e^{\varpi(2-v_i-v_j)}   } }{ {\left( \sum_{i=1}^N v_i e^{\varpi (1-v_i)}  \right)}^2 }<0 \tag{\theequation f} \label{equ: proof of lemma 1 f}
\end{flalign}
(\ref{equ: proof of lemma 1 d}) is gotten by exchanging the notation of subscript. (\ref{equ: proof of lemma 1 f}) follows from the fact that $v_i \ge 0$ for $1\le i\le N$ and not all of $v_i$ is zero.

\subsection{Proof of Lemma \ref{lem: special case}}\label{app2}
When $\varpi=0$, we have
\begin{equation}
g(0,V)=\frac{\sum_{i=1}^N v_i^2 }{\sum_{i=1}^N v_i } =\sum\nolimits_{i = 1}^N { {v_i^2} }.
\end{equation}
Let $g(-\infty,V)\buildrel \Delta \over = \mathop {\lim }\limits_{\varpi  \to  - \infty } {g(\varpi,V)}$, and we obtain
\begin{flalign}
  g(-\infty,V)=\mathop {\lim }\limits_{\varpi  \to  - \infty } {{\sum\nolimits_{i = 1}^n {v_i^2{e^{\varpi (1 - {v_i})}}} } \over {\sum\nolimits_{i = 1}^n {{v_i}{e^{\varpi (1 - {v_i})}}} }} &= \mathop {\lim }\limits_{\varpi  \to  - \infty } {{v_{\max }^2{e^{\varpi (1 - {v_{\max }})}} + \sum\limits_{{v_i} \ne {v_{\max }}} {v_i^2{e^{\varpi (1 - {v_i})}}} } \over {{v_{\max }}{e^{\varpi (1 - {v_{\max }})}} + \sum\limits_{{v_i} \ne {v_{\max }}} {{v_i}{e^{\varpi (1 - {v_i})}}} }} \nonumber \\
  &= \mathop {\lim }\limits_{\varpi  \to  - \infty } {{{v_{\max }} + \sum\limits_{{v_i} \ne {v_{\max }}} {{{v_i^2} \over {{v_{\max }}}}{e^{\varpi ({v_{\max }} - {v_i})}}} } \over {1 + \sum\limits_{{v_i} \ne {v_{\max }}} {{{{v_i}} \over {{v_{\max }}}}{e^{\varpi ({v_{\max }} - {v_i})}}} }}=v_{\max}
\end{flalign}
Let $g(+\infty,V)\buildrel \Delta \over = \mathop {\lim }\limits_{\varpi  \to  + \infty } {g(\varpi,V)}$, and it is noted that
\begin{flalign}
g(+\infty,V)=  \mathop {\lim }\limits_{\varpi  \to  + \infty } {{\sum\nolimits_{i = 1}^n {v_i^2{e^{\varpi (1 - {v_i})}}} } \over {\sum\nolimits_{i = 1}^n {{v_i}{e^{\varpi (1 - {v_i})}}} }} &= \mathop {\lim }\limits_{\varpi  \to  + \infty } {{v_{\min }^2{e^{\varpi (1 - {v_{\min }})}} + \sum\limits_{{v_i} \ne {v_{\min }}} {v_i^2{e^{\varpi (1 - {v_i})}}} } \over {{v_{\min }}{e^{\varpi (1 - {v_{\min }})}} + \sum\limits_{{v_i} \ne {v_{\min }}} {{v_i}{e^{\varpi (1 - {v_i})}}} }} \nonumber \\
 &= \mathop {\lim }\limits_{\varpi  \to  + \infty } {{{v_{\min }} + \sum\limits_{{v_i} \ne {v_{\min }}} {{{v_i^2} \over {{v_{\min }}}}{e^{\varpi ({v_{\min }} - {v_i})}}} } \over {1 + \sum\limits_{{v_i} \ne {v_{\min }}} {{{{v_i}} \over {{v_{\min }}}}{e^{\varpi ({v_{\min }} - {v_i})}}} }}=v_{\min}
\end{flalign}
Hence, $v_{\min}< g(\varpi,V) <v_{\max}$ when $\varpi \in (-\infty,+\infty)$ according to Lemma \ref{lem: monotonicity}.

\subsection{Proof of Proposition \ref{pro: 1}}\label{app3}
1) First, if $u_x = u_{\max}$ and $u_x = u_{\min}$, the non-zero solution of equation \eqref{df:tild_w_x} does not exist since $f(\varpi,x,U)$ is monotonously changing (cf. Remark \ref{rem: mono_min_max}). 

Second, if $u_x=\gamma_u$, $u_x \ne u_{\max}$ and $u_x \ne u_{\min}$, no solution exists since $f(0,x,U)>f(\varpi,x,U)$ for $\varpi \ne 0$ according to Theorem \ref{thm: monotonicity}.

Third, if $u_x<\gamma_u$, $u_x \ne u_{\max}$ and $u_x \ne u_{\min}$, according to Theorem \ref{thm: monotonicity}, we have $\varpi_x>0$ and $f(\varpi, x, U)$ is increasing in $(0,\varpi_x)$ while decreasing in $(\varpi_x, +\infty)$, where $\varpi_x$ is the solution to $g(\varpi,U)=1-\alpha$. It is easy to check that $f(+\infty,x,U) -f(0,x,U)=-u_x<0$ (cf. \eqref{equ:PQ_min1}) and $f(\varpi_x,x,U) -f(0,x,U)>0$. According to zero point theorem, the non-zero solution to $f(\varpi, x, U)=u_x$, i.e., $\widetilde{\varpi}_x$ (cf. \eqref{df:tild_w_x}), would be found in $(\varpi_x, +\infty)$.

Fourth, likewise, the non-zero solution $\widetilde{\varpi}_x$ can be found in $(-\infty,\varpi_x)$ if $u_x>\gamma_u$, $u_x \ne u_{\max}$ and $u_x \ne u_{\min}$.

2) and 3) Consider $u_y<u_x<\gamma_u$ first. 
According to Theorem \ref{thm: monotonicity}, we have $\varpi_x>0$ and $f(\varpi, x, U)$ is increasing in $(0,\varpi_x)$ while decreasing in $(\varpi_x, +\infty)$, where $\varpi_x$ is the solution to $g(\varpi,U)=1-\alpha$.
    Since we have $f(0, x, U)=u_x$, the non-zero solution to $f(\varpi, x, U)=u_x$, i.e., $\widetilde{\varpi}_x$ (cf. \eqref{df:tild_w_x}), would only be found in $(\varpi_x, +\infty)$. Likewise, we obtain that $\widetilde{\varpi}_y>\varpi_y>0$. 
    Since $\varpi_y>\varpi_x$ when $u_y<u_x$ (cf. Theorem \ref{thm: monotonicity}), we have $\widetilde{\varpi}_x>\varpi_x>0$ and $\widetilde{\varpi}_y >\varpi_y>\varpi_x>0$.

Furthermore, $f(\widetilde{\varpi}_y,y, U)=u_y$ implies $ \frac{e^{\widetilde{\varpi}_y(1-u_y)}}{\sum\nolimits_{i=1}^N {u_i e^{\widetilde{\varpi}_y (1-u_i)}}}=1$. Hence, we have 
\begin{flalign}\label{equ:proof of pro}
f(\widetilde{\varpi}_y,x, U)&=\frac{e^{\widetilde{\varpi}_y(1-u_x)}}{\sum\nolimits_{i=1}^N {u_i e^{\widetilde{\varpi}_y (1-u_i)}}}\\
&=\frac{e^{\widetilde{\varpi}_y(u_y-u_x)} e^{\widetilde{\varpi}_y(1-u_y)}}{\sum\nolimits_{i=1}^N {u_i e^{\widetilde{\varpi}_y (1-u_i)}}}    \tag{\theequation a} \label{equ:proof of pro a}\\
&<u_x=f(\widetilde{\varpi}_x,x, U),    \tag{\theequation b} \label{equ:proof of pro b}
\end{flalign}
where (\ref{equ:proof of pro b}) follows $ \frac{e^{\widetilde{\varpi}_y(1-u_y)}}{\sum\nolimits_{i=1}^N {u_i e^{\widetilde{\varpi}_y (1-u_i)}}}=1$, $u_x>u_y$ and $\widetilde{\varpi}_y>0$.

Since $f(\varpi,x, U)$ is decreasing with $\varpi$ according to Theorem \ref{thm: monotonicity}, we then have $\widetilde{\varpi}_y>\widetilde{\varpi}_x$.

Second, likewise, we can prove that $\widetilde{\varpi}_x<\widetilde{\varpi}_y<0$ if $u_x>u_y>\gamma_u$.

Third, if $u_y<\gamma_u < u_x$, we shall have $\widetilde{\varpi}_x<0$ and $\widetilde{\varpi}_y>0$. Obviously, $\widetilde{\varpi}_y>\widetilde{\varpi}_x$ in this case.

Thus, the proof of Proposition \ref{pro: 1} is completed. 



  \vspace{-8mm}



\end{document}